\let\doendproof\endproof
\newif\ifqed
\newcommand{\theqed}{~\hfill\qed}
\renewcommand{\endproof}{\ifqed\theqed\fi\global\qedtrue\doendproof}
\newcommand{\softO}{\widetilde{O}}
\newcommand{\C}{\ensuremath{\mathcal{C}}\xspace}
\newcommand{\pspt}[1]{$#1$-\texttt{PASPT}}
\renewcommand{\epsilon}{\varepsilon}
\newcommand{\hide}[1]{\relax}
\newcommand{\tree}[2]{\ensuremath{T_{#1}(#2)}}
\newcommand{\subtree}[3]{\ensuremath{T_{#1}(#2,#3)}}
\newcommand{\erdos}{Erd\H{o}s-R\'enyi\xspace}
\newcommand{\barabasi}{Barab\'asi-Albert\xspace}
\title{Path-Fault-Tolerant Approximate\\
Shortest-Path Trees\thanks{Research partially supported by the Italian Ministry
of University and Research
under the Research Grants: 2010N5K7EB PRIN 2010 ``ARS TechnoMedia'' (Algoritmica
per le Reti Sociali Tecno-mediate), and 2012C4E3KT PRIN 2012 ``AMANDA''
(Algorithmics for MAssive and Networked DAta).}}
\author{Annalisa D'Andrea\inst{1}, Mattia D'Emidio\inst{1}, Daniele
Frigioni\inst{1}, \\ Stefano Leucci\inst{1}, Guido Proietti\inst{1,2}}
\institute{Dipartimento di Ingegneria e Scienze dell'Informazione e Matematica, 
Università degli Studi dell'Aquila, Via Vetoio, I--67100 L'Aquila, Italy.
\and
Istituto di Analisi dei Sistemi ed Informatica
``Antonio Ruberti'', Consiglio Nazionale delle Ricerche, Via dei Taurini 19,
I--00185 Roma, Italy.\\
\email{\{annalisa.dandrea, stefano.leucci\}@graduate.univaq.it\\
\{mattia.demidio, daniele.frigioni, guido.proietti\}@univaq.it}}
\begin{document}

\pagestyle{plain}	
\maketitle
\newcommand{\lcomment}[1]{ \tag{#1} }

\begin{abstract}
Let $G=(V,E)$ be an $n$-nodes non-negatively real-weighted undirected graph. In
this paper we show how to enrich a {\em single-source shortest-path tree} (SPT)
of $G$ with a \emph{sparse} set of \emph{auxiliary} edges selected from $E$, in
order to create a structure which tolerates effectively a \emph{path failure} in
the SPT. This consists of a simultaneous fault of a set $F$ of at most $f$
adjacent edges along a shortest path emanating from the source, and it is
recognized as one of the most frequent disruption in an SPT. We show that, for
any integer parameter $k \geq 1$, it is possible to provide a very sparse (i.e.,
of size $O(kn\cdot f^{1+1/k})$) auxiliary structure that carefully approximates
(i.e., within a stretch factor of $(2k-1)(2|F|+1)$) the true shortest paths from
the source during the lifetime of the failure. Moreover, we show that our
construction can be further refined to get a stretch factor of $3$ and a size of 
$O(n \log n)$ for the special case $f=2$, and that it can be converted into a 
very 
efficient \emph{approximate-distance
sensitivity oracle}, that allows to quickly (even in optimal time, if $k=1$)
reconstruct the shortest paths (w.r.t. our structure) from the source after a
path failure, thus permitting to perform promptly the needed rerouting
operations. Our structure compares favorably with previous known solutions, as
we discuss in the paper, and moreover it is also very effective in practice, as
we assess through a large set of experiments.
\end{abstract}

\section{Introduction}
Broadcasting data from a source node to every other node of a network is one of
the most basic communication primitives in modern networked applications.
Given the widespread diffusion of such applications, in the recent past, there
has been an increasing demand for more and more efficient, i.e. scalable and
reliable, methods to implement this fundamental feature.

The natural solution is that of modeling the network as a graph (nodes as
vertices and links as edges) and building a (fast and compact) structure to be
used to transmit the data. In particular, the most common approach of this kind
is that of computing a {\em shortest-path tree} (SPT), rooted at the desired
source node, of such graph.

However, the SPT, as any tree-based topology, is prone to unpredictable events
that might occur in practice, such as failures of nodes and/or links. Therefore,
the use of SPTs might result in a high sensitivity to malfunctioning, which
unavoidably causes the undesired effect of disconnecting sets of nodes from the
source and thus the interruption of the broadcasting service.

Therefore, a general approach to cope with this scenario is to make the SPT
\emph{fault-tolerant} against a given number of simultaneous component failures,
by adding to it a set of suitably selected edges from the underlying graph, so
that the resulting structure will remain connected w.r.t. the source. In other
words, the selected edges can be used to build up alternative paths from the
root, each one of them in replacement of a corresponding original shortest path
which was affected by the failure. However, if these paths are constrained to be
\emph{shortest}, then it can be easily seen that for a non-negatively real
weighted and undirected graph of $n$ nodes and $m$ edges, this may require
as much as $\Theta(m)$ additional edges, also in the case in which
$m=\Theta(n^2)$. In other words, the set-up costs of the strengthened network
may become unaffordable.

Thus, a reasonable compromise is that of building \emph{sparse} and
fault-tolerant structure which \emph{approximates} the shortest paths from the
source, i.e., that contains paths which are guaranteed to be longer than the
corresponding shortest paths by at most a given \emph{stretch} factor, for any
possible edge/vertex failure that has to be handled. In this way, the obtained
structure can be revised as a 2-level communication network: a first
\emph{primary} level, i.e., the SPT, which is used when all the components are
operational, and an \emph{auxiliary} level which comes into play as soon as a
component undergoes a failure.

In this paper, we show that an efficient structure of this sort exists for a
prominent class of failures in an SPT, namely those involving a set of adjacent
edges along a shortest path emanating from the source of the SPT. Our study is
motivated by several applications, such as, for instance, traffic engineering in
optical networks or path-congestion management in road-networks, where failures
in the above form often affect the SPT~\cite{BW09,DDFLP13,MCFF09}.
For this kind
of failure, also known as a \emph{path failure}\footnote{Notice that this is a
small abuse of nomenclature, since failures we consider are restricted to the
path's edges only.}, we show that it is possible not only to obtain resilient
sparse structures, but also that these can be pre-computed efficiently, and that
they can return quickly the auxiliary network level.

\subsection{Related Work}
In the recent past, many efforts have been dedicated to devising single and
multiple edge/vertex fault-tolerant structures. More formally, let $r$ denote a
distinguished source vertex of a non-negatively real-weighted and undirected
graph $G=(V(G),E(G))$, with $n$ nodes and $m$ edges. We say that a spanning
subgraph $H$ of $G$ is an \emph{Edge/Vertex-fault-tolerant $\alpha$-Approximate
SPT}  (in short, $\alpha$-{\ttfamily E/VASPT}), with $\alpha>1$, if it satisfies
the following condition: For each edge $e \in E(G)$ (resp., vertex $v \in
V(G)$), all the distances from $r$ in the subgraph $H-e$, i.e., $H$ deprived of
edge $e$ (resp., the subgraph $H-v$, i.e., $H$ deprived of vertex $v$ and all
its incident edges) are $\alpha$-stretched (i.e., at most $\alpha$ times longer)
w.r.t. the corresponding distances in $G-e$ (resp., $G-v$).

An early work on the matter is \cite{NPW03}, where the authors showed that by
adding at most $n-1$ edges to the SPT, a 3-{\ttfamily EASPT} can be obtained.
This was shown to be very useful in order to compute a recovery scheme needing
only one backup routing table at each node \cite{IIOY03}. In \cite{GP07}, the
authors showed instead how to build a 1-{\ttfamily EASPT} in
$\softO(m n)$ time\footnote{The $\softO$ notation hides poly-logarithmic
factors in $n$.}. Notice that, a 1-{\ttfamily EASPT} contains \emph{exact}
replacement paths from the source, but of course its size might be $\Theta(n^2)$
if $G$ is dense. Then, in \cite{BK13}, Baswana and Khanna devised a
$3$-{\ttfamily VASPT} of size $O(n \log n)$. Later on, a significant improvement
to this
result was provided in \cite{BGLP14}, where the authors showed the existence of
a $(1+\varepsilon)$-{\ttfamily E/VASPT}, for any $\varepsilon >0$, of size
$O(\frac{n \log n}{\varepsilon^2})$.

Concerning \emph{unweighted} graphs, in \cite{BK13} the authors give a
$(1+\varepsilon)$-{\ttfamily VABFS} (where BFS stands for \emph{breadth-first
search tree}) of size $O(\frac{n}{\varepsilon^3}+n \log
n)$ (actually, such a size can be easily reduced to
$O(\frac{n}{\varepsilon^3})$). Then, Parter and Peleg in \cite{PP14} present a
set of lower and upper bounds to the size of a
$(\alpha,\beta)$-{\ttfamily EABFS}, namely a structure for which the length of a
path is
stretched by at most a factor of $\alpha$, plus an additive term of $\beta$.
More
precisely, they construct a $(1,4)$-{\ttfamily EABFS} of size $O(n^{4/3})$.
Moreover, assuming at most $f=O(1)$ edge failures can take place, they show the
existence of a $(3(f +1),(f+1) \log n)$-{\ttfamily EABFS} of size $O(fn)$.
This was improving onto the general fault-tolerant \emph{spanner} construction
given
in \cite{CLPR09}, which, for weighted graphs and for any integer parameter $k
\geq 1$, is resilient to up to $f$ edge failures with stretch
factor of $2k-1$ and size $O(f \cdot n^{1+1/k})$.

\hide{Finally, we mention the general fault-tolerant \emph{spanner} construction
given
in \cite{CLPR09}, which, for any integer parameter $k \geq 1$, is resilient to
up to $f$ edge/vertex failures with stretch
factor of $(2k-1)$ and size $\softO(f^2 \cdot k^{f+1} \cdot n^{1+1/k})$. This
latter result has been then improved through a randomized construction in
\cite{DK11}, where the expected size was reduced to $\softO(f^{2-1/k} \cdot
n^{1+1/k})$.}

On the other hand, concerning \emph{approximate-distance sensitivity oracles}
(simply \emph{$\alpha$-oracles} in the
following, where $\alpha$ denotes the guaranteed approximation ratio w.r.t. true
distances), researchers aimed at computing, with a \emph{low} preprocessing
time, a \emph{compact}
data structure able to \emph{quickly} answer to some distance query following
an edge/vertex failure. The vast literature dates back to the work \cite{TZ05}
of Thorup and Zwick, who
showed that, for any integer $k \geq 1$, any undirected graph with non-negative
edge weights can be preprocessed in $O(km \cdot n^{1/k})$ time to build a
$(2k-1)$-oracle of size $O(k\cdot n^{1+1/k})$, answering in $O(k)$ time to a
post-failure distance query, recently reduced to $O(1)$ time in
\cite{DBLP:conf/stoc/Chechik14}. Due to the long-standing girth conjecture
of Erd\H{o}s \cite{Erd64}, this is essentially optimal. Concerning the failure
of a set $F$ of at most $f$ edges, in \cite{CLPR10} the authors built, for any
integer $k \geq 1$, a $(8k - 2)(f + 1)$-oracle of size $O(fk \cdot n^{1+1/k}
\log (n W))$, where $W$ is the ratio of the maximum to the minimum edge weight
in $G$, and with a query time of
$\softO(|F| \cdot \log \log d)$, where $d$ is the actual distance between the
queried pair of nodes in $G-F$. As far as \emph{SPT oracles} (i.e., returning
distances/paths only from a source node) are concerned, in \cite{BK13}
it is shown how to build in $O(m \log n + n \log^2 n)$ time an SPT oracle of
size $O(n \log n)$, that for any single-vertex-failure
returns a 3-stretched replacement path in time proportional to the path's size.
Finally, for directed graphs with integer positive edge weights bounded by $M$,
in \cite{GW12} the authors show how to
build in $\softO(M n^{\omega})$ time and $\Theta(n^2)$ space a randomized
single-edge-failure SPT oracle returning \emph{exact} distances in $O(1)$ time,
where $\omega< 2.373$ denotes the matrix
multiplication exponent.

\subsection{Our Results}
In this paper, we consider the specific, yet interesting, problem of making a
SPT resilient to the failure of any sub-path of size (i.e., number of edges) at
most $f \geq 1$ emanating from its source.

More in details, let $F$ be a set of cascading edges of a given SPT, where
$0<|F|\leq f$. We say
that a spanning subgraph $H$ of $G$ is
a \emph{Path-Fault-Tolerant $\alpha$-Approximate SPT} (in short,
$\alpha$-\texttt{PASPT}), with
$\alpha\ge 1$, if, for each vertex $z \in V(G)$, the following inequality holds:
$
d_{H-F}(z) \leq \alpha \cdot d_{G-F}(z)
$,
where $d_{G-F}(z)$ (resp., $d_{H-F}(z)$)
denotes the distance from $r$ to $z$ in $G-F$ (resp., $H-F$).
For any integer parameter $k \geq 1$, we can provide the following results:

\begin{itemize}

\item
We give an algorithm for computing, in $O(n\cdot(m +f^2))$ time, a
\pspt{(2k-1)(2|F|+1)} containing
$O(k n\cdot f^{1+\frac{1}{k}})$ edges;

\item
We give an algorithm for computing, in $O(n \cdot (m + f^2) )$ time, an oracle
of size $O(k n\cdot f^{1+\frac{1}{k}})$ which
is able to return: (i) a $(2k-1)(2|F|+1)$-approximate distance in $G-F$ between
$r$ and
a generic vertex $z$ in $O(k)$ time; (ii) the associated path in $O(k +f +
\ell)$ time, where $\ell$ is the number of its edges; if $k=1$, this can be
further reduced to $O(\ell)$ time.
\end{itemize}

Concerning the former result, it compares favorably with both the aforementioned
general fault-tolerant spanner constructions given in \cite{CLPR09}, and the
unweighted {\ttfamily EABFS} provided in \cite{PP14}, while concerning instead
the latter result, it compares favorably with the fault-tolerant oracle given in
\cite{CLPR10}.  For the sake of fairness, we remind that all these structures
were thought to cope with edge failures arbitrarily spread across $G$, though.

Besides that, we also analyze in detail the special case when at most $f = 2$
failures of cascading edges can occur,
for which we are able to achieve a significantly better stretch factor. More
precisely, we design:
(i) an algorithm for computing, in $O(n \cdot (m +n \log n))$ time, a 3-\texttt{PASPT}
containing $O(n \log n)$ edges;
(ii) an algorithm for computing, in $O(n \cdot (m +n \log n))$ time, an oracle
of size $O(n \log n)$ which is able to return a
$3$-approximate distance in $G-F$ between $r$ and a generic vertex $z$ in constant time, and
the associated path in a time proportional to the number of its edges.
Some of the proofs related to these latter results
will be given in the appendix.

Finally, we provide an experimental evaluation of the proposed
structures, to assess their performance in practice w.r.t.\ both size and
quality of the stretch. 
\section{Notation}
\label{sec:preliminaries}
In what follows, we give our notation for the considered problem.
We are given a non-negatively real-weighted, undirected graph $G=(V(G),E(G))$ with
$|V(G)|=n$ vertices and $|E(G)|=m$ edges. We denote by $w_G(e)$ or $w_G(u,v)$
the weight of the edge $e=(u,v) \in E(G)$.
Given an edge $e=(u,v)$, we denote by $G - e$ or $G - (u,v)$ the graph obtained
from $G$ by removing the edge $e$. Similarly, for a set $F$ of edges, $G-F$
denotes the graph obtained from $G$ by removing the edges in $F$. Furthermore,
given a vertex $v \in V(G)$, we denote by $G - v$ the graph obtained from $G$ by
removing vertex $v$ and all its incident edges.
Given a graph $G$, we call $\pi_G(x,y)$ a shortest path between two vertices
$x,y \in V(G)$,
$d_G(x,y)$ its weighted length (i.e., the distance from $x$ to $y$ in $G$),
$\tree{G}{r}$ a shortest path tree (SPT) of $G$ rooted at a certain
distinguished
source vertex $r$. Moreover, we denote by $\subtree{G}{r}{x}$ the subtree of
$\tree{G}{r}$ rooted at vertex $x$.
Whenever the graph $G$ and/or the source vertex $r$ are clear from the context,
we might omit them, i.e., we write $\pi(u)$ and $d(u)$ instead of $\pi_G(r,u)$
and $d_G(r,u)$, respectively. When considering an edge $(x,y)$ of an SPT, we
assume $x$ and $y$ to be the closest and the furthest endpoints from $r$,
respectively.
Furthermore, if $P$ is a path from $x$ to $y$ and $Q$ is a path from $y$ to $z$,
with $x,y,z \in V(G)$, we denote by $P \circ Q$ the path from $x$ to $z$
obtained by concatenating $P$ and $Q$. We also denote by $w(P)$ the total
weight
of the edges in $P$.

For the sake of simplicity we consider only edge weights that are strictly
positive. However, our entire analysis also extends to non-negative weights.
Throughout the rest of the paper, we assume that, when multiple shortest
paths exist, ties are broken in a consistent manner. In particular we fix an SPT
$T=\tree{G}{r}$ of $G$ and, given a graph $H \subseteq G$ and $x,y \in V(H)$,
whenever we compute the path $\pi_H(x,y)$ and ties arise, we prefer edges in
$E(T)$.

A path between any two vertices $u,v \in V(G)$ is said to be an
$\alpha$--approximate shortest path if its length is at most $\alpha$
times the length of the shortest path between $u$ and $v$ in $G$.
For the sake of simplicity, we assume that, if a set of at most $f$ edge failures has
to be handled, the original graph is $(f+1)$--edge connected.
Indeed, if this is not the case, we can guarantee the $(f+1)$--edge connectivity
by adding at most $O(nf)$ edges of weight $+\infty$ to $G$.
Notice that this is not actually needed by any of the proposed algorithms.

\section{Our \texttt{PASPT} Structure and the Corresponding Oracle}
\label{sec:f_structure}
In what follows, we give a high-level description of our algorithm for
computing a \pspt{(2|F|+1)}, namely $H$ (see Algorithm~\ref{alg:f_path}), where
$|F| \le f$.
We define the level $\ell(v)$ of a vertex $v \in V(G)$ to be the hop-distance
between $r$ and $v$ in $T = \tree{G}{r}$, i.e., the number of edges of the
unique
path from $r$ to $v$ in $T$.
Note that, when a failure of $|F|$ consecutive edges occurs on a shortest
path, $T$ will be broken into a forest $\C$ of $|F|+1$ subtrees. We
consider these subtrees as rooted according to $T$, i.e., each tree
$T_i$ is rooted at vertex $r_i$ that minimizes $\ell(r_i)$.

Roughly speaking, the algorithm considers all possible path failures $F^*$ of
$f$ vertices by fixing the deepest endpoint $v$ of the failing path. It then
reconnects the resulting $f + 1$ subtrees of $G-F^*$ by selecting at most
$O(f^2)$ edges into a graph $U$, one for each couple of trees $T^*_i, T^*_j$
of the forest $G-F$. These edges are either directly added to the structure $H$
or they are first sparsified into a graph $U^\prime$ by using a suitable
multiplicative $(2k-1)$-spanner, so that only $k f^{1+\frac{1}{k}}$ of them are 
added to
$H$.

In particular, it is known that, given an $n$-vertex graph and an integer $k \ge 1$, both a $(2k-1)$--spanner and a $(2k-1)$--approximate distance oracle of size $O(kn^{1+\frac{1}{k}})$ can be built in $O(n^2)$ time. The oracle can report an approximate distance between two vertices in $O(k)$ time, and the corresponding approximate shortest path in time proportional to the number of its edges. For further details we refer the reader to \cite{DBLP:conf/soda/BaswanaS04,DBLP:journals/talg/BaswanaS06,DBLP:conf/icalp/RodittyTZ05}.
Recently, it has been shown in \cite{DBLP:conf/stoc/Chechik14}
that a randomized $(2k-1)$--approximate distance oracle of \emph{expected} size
$O(kn^{1+\frac{1}{k}})$ can be built, so that answering a distance query
requires only constant time. In what follows, however, we only describe results
which are based on deterministic construction and provide a worst case guarantee
on the size of the resulting structures.

\begin{algorithm}[t]
\DontPrintSemicolon
\SetKwInOut{Input}{Input}
\SetKwInOut{Output}{Output}
\Input{A graph $G$, $r \in V(G)$, an SPT $T = \tree{G}{r}$, an integer $f$}
\Output{A \pspt{(2|F|+1)} of $G$ rooted at $r$}
\BlankLine
$H \gets T = \tree{G}{r}$\;
\ForEach{$v \in V(G)$ \label{ln:main_loop}}{
	Let $\langle r = z_0, z_1, \dots, z_\ell(v) \rangle$ be the path from
$r$ to $v$ in $T$\;
	\tcp{$F^*$ contains last $\min\{f, \ell(v) \}$ edges of the path}
	Let $F^* = \{(z_{i-1}, z_i) : i > \ell(v) - \min\{\ell(v), f\}\}$ \;
	Let $\C^* = \{T^*_1, T^*_2, \dots\}$ be the set of connected components
of $T-F^*$\;
	\BlankLine
	\tcp{Build an auxiliary graph $U$ associated with $v$}
	$U \gets ( \{r^*_i \, : r^*_i \mbox{ is the root of }  T^*_i \}
,\emptyset)$ \;
  	\ForEach{$T^*_i,T^*_j \in \C^* \, : \, T^*_i \neq T^*_j$}{
	Let $E_{i,j} = \{ (u,v) \in E(G) \setminus F^* : u \in V(T^*_i), v \in
V(T^*_j)\}$ \;
	\BlankLine
	$(x',y') \gets \underset{(x,y) \in E_{i,j}}{\arg\min} \{d_T(r^*_i,
x) + w_G(x,y) + d_T(y,r^*_j)\}$ \label{ln:formula}\;
	\tcp{We say that $(x', y') \in E(G)$ is associated to $(r^*_i,
r^*_j) \in E(U)$}
    	$E(U) \gets E(U) \cup \{ (r^*_i, r^*_j) \}$\;
    	$w_U(r^*_i,r^*_j) =  d_T(r^*_i, x') + w_G(x',y') + d_T(x',r^*_j)$\;
    }
    \BlankLine
    \tcp{Optional step, executed only if $k \neq 1$. Otherwise, let $U^\prime = U$.}
	$U^\prime \gets $ Compute a $(2k-1)$-spanner of $U$ \label{ln:sparsify}\;
   	$E(H) \gets E(H) \cup E(U^\prime)$ \;
}
\Return $H$
\caption{Algorithm for building a \pspt{(2|F|+1)}. Notice that an optional integer parameter $k \ge 1$ is used. By default we set $k=1$.}
\label{alg:f_path}
\end{algorithm}

We start by bounding the running time of  Algorithm~\ref{alg:f_path}:
\begin{lemma}
	Algorithm~\ref{alg:f_path} requires $O(n(m+f^2))$ time.
\end{lemma}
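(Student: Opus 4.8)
The plan is to account for the cost of the main loop over all vertices $v\in V(G)$, and to show each iteration costs $O(m+f^2)$, so that the total is $O(n(m+f^2))$. First I would note that the preprocessing step $H\gets T$ costs $O(n)$, and that the level function $\ell(\cdot)$ and the ancestor structure of $T$ can be precomputed once in $O(n)$ time, so that the path $\langle r=z_0,\dots,z_{\ell(v)}\rangle$ and the edge set $F^*$ can be identified in $O(f)$ time per iteration (we only need the last $\min\{\ell(v),f\}$ edges, not the whole root-to-$v$ path). Identifying the forest $\C^*$ of $|F^*|+1\le f+1$ subtrees amounts to recording, for each tree, its root $r^*_i$; since the trees partition $V(G)$, a membership labeling $\mathrm{comp}(u)$ assigning each vertex to its component can be computed in $O(n)$ time per iteration by a single traversal of $T$ (or, with more care, in $O(m)$ as part of the edge scan below), which is absorbed by the $O(m)$ term.

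Next I would analyze the double loop over pairs $T^*_i,T^*_j$. The key observation is that one does not iterate over the $\Theta(f^2)$ pairs and, for each, rescan all of $E(G)$; instead one scans $E(G)$ \emph{once} per outer iteration. For each edge $(x,y)\in E(G)\setminus F^*$ with $x,y$ in distinct components $T^*_i,T^*_j$ (determined in $O(1)$ via the labels $\mathrm{comp}(x),\mathrm{comp}(y)$), one evaluates the quantity $d_T(r^*_i,x)+w_G(x,y)+d_T(y,r^*_j)$ in $O(1)$ time — here $d_T(r^*_i,x)=d(x)-d(r^*_i)$, so each such distance is obtained in constant time from the precomputed depths $d(\cdot)$ in $T$ — and updates the current minimizer for the (unordered) pair $\{i,j\}$. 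Maintaining, for each of the $O(f^2)$ pairs, the best edge seen so far requires $O(f^2)$ space and $O(f^2)$ initialization time per outer iteration; the scan itself is $O(m)$. Building the auxiliary graph $U$ (at most $O(f^2)$ weighted edges over $O(f)$ vertices) from these minimizers is then $O(f^2)$. Thus the cost of producing $U$ in one outer iteration is $O(m+f^2)$.

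Finally I would bound the sparsification step (line~\ref{ln:sparsify}): by the fact quoted just before the algorithm, a $(2k-1)$-spanner of an $N$-vertex graph is computable in $O(N^2)$ time; here $N=|F^*|+1\le f+1$, so this costs $O(f^2)$, and adding $E(U')$ to $E(H)$ is likewise $O(f^2)$ (in fact $O(kf^{1+1/k})=O(f^2)$). Summing over the $n$ iterations of the main loop gives $O(n(m+f^2))$, as claimed. I expect the only genuinely non-routine point to be the reorganization of the nested pair-loop into a single edge scan together with the constant-time distance evaluations $d_T(r^*_i,x)=d(x)-d(r^*_i)$; once that is in place, every remaining step is a direct charge against either the $O(m)$ or the $O(f^2)$ budget.
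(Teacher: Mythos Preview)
Your proposal is correct and follows essentially the same approach as the paper's proof: both bound each of the $n$ outer iterations by $O(m+f^2)$, the key point being that all $O(f^2)$ weights of $U$ are computed via a single scan of $E(G)\setminus F^*$ while maintaining the running minimum for each pair of components, and the spanner step costs $O(f^2)$. Your write-up is simply more explicit about implementation details (component labeling, constant-time evaluation of $d_T(r^*_i,x)=d(x)-d(r^*_i)$, and initialization costs) that the paper leaves implicit.
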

\begin{proof}
	Notice that the loop in line~\ref{ln:main_loop} considers each vertex of $G$ at most once.
	We bound the time required by each iteration.
	For each vertex $v$ a complete auxiliary graph $U$ of $O(f)$ vertices is built.
	Moreover, the weights of all the edges of $U$ can be computed in $O(m)$ time by scanning
	all the edges of $E(G) \setminus F^*$ while keeping track, for each pair of vertices $r^*_i, r^*_j \in V(U)$,
	of the minimum value of the formula in line~\ref{ln:formula}.
	Finally, the optional spanner construction invoked by line~\ref{ln:sparsify} requires $O(f^2)$ time.
	This concludes the proof.
\end{proof}

We now bound the size of the returned structure:
\begin{lemma}
\label{lemma:f_path:size}
The structure $H$ returned by Algorithm~\ref{alg:f_path} contains $O(k n \cdot
f^{1+\frac{1}{k}})$ edges.
\end{lemma}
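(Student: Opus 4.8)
The plan is to count the edges added to $H$ in a per-iteration fashion and sum over the $n$ iterations of the main loop. Since $H$ starts as $T$ (contributing $n-1$ edges) and the loop at line~\ref{ln:main_loop} runs at most $n$ times, it suffices to show that each iteration adds $O(k f^{1+\frac{1}{k}})$ edges to $H$; summing then yields the claimed $O(kn \cdot f^{1+\frac{1}{k}})$ bound (the initial tree being absorbed into this count).

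First I would observe that, within a single iteration associated with a vertex $v$, the forest $\mathcal{C}^*$ has at most $f+1$ components, so the auxiliary graph $U$ has at most $f+1$ vertices and hence $O(f^2)$ edges (one per ordered pair of distinct components, via the $\arg\min$ at line~\ref{ln:formula}). If $k = 1$, the optional sparsification is skipped and we add all of $E(U)$ to $H$, i.e.\ $O(f^2) = O(f^{1+1/k})$ edges, matching the bound. If $k \ge 2$, line~\ref{ln:sparsify} replaces $U$ by a $(2k-1)$-spanner $U'$; by the standard spanner size bound quoted earlier in the section, a $(2k-1)$-spanner of an $N$-vertex graph has $O(k N^{1+1/k})$ edges, so with $N = |V(U)| \le f+1$ we get $|E(U')| = O(k f^{1+1/k})$. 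Thus in all cases the number of edges contributed in one iteration is $O(k f^{1+1/k})$.

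Finally I would sum: over the at most $n$ iterations we add a total of $O(kn \cdot f^{1+1/k})$ edges, and adding the $n-1$ edges of the initial SPT $T$ does not change this asymptotic bound. Hence $|E(H)| = O(kn \cdot f^{1+1/k})$, as claimed.

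I do not expect any serious obstacle here; the only point requiring slight care is the case analysis on whether the spanner step is executed ($k=1$ versus $k\ge 2$), and the trivial observation that $f^2$ is subsumed by $kf^{1+1/k}$ when $k=1$ so that a single formula covers both cases.
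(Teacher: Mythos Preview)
Your proposal is correct and follows essentially the same approach as the paper's proof: start from the $n-1$ edges of $T$, bound the number of vertices of $U$ by $f+1$ and hence $|E(U)|=O(f^2)$, apply the $(2k-1)$-spanner size bound $O(k|V(U)|^{1+1/k})=O(kf^{1+1/k})$ (with the $k=1$ case handled separately), and sum over the $n$ iterations. There is nothing to add.
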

\begin{proof}
At the beginning of the algorithm, $H$ coincides with $T = \tree{G}{r}$, so
$|E(H)|=O(n)$. Therefore, we only need to bound the number of edges added to
$H$ during the
execution of the algorithm. Notice that, for each vertex $v \in V(G)$,
Algorithm~\ref{alg:f_path} considers at most $f+1$ connected components of
$\mathcal{C^*}$. For each pair of components, at most one edge is added to $U$,
hence $|E(U)|=O(f^2)$.
Either $k=1$ and $U^\prime = U$ or $k > 1$ and $U^\prime$ is a $(2k-1)$--spanner of $U$.
In both cases we have $|U^\prime| = O(k |U|^{1+\frac{1}{k}}) = O(k f^{1+\frac{1}{k}})$.
As only the edges of $U^\prime$ gets added to $H$, the claim
follows.
\end{proof}

We now upper-bound the distortion provided by the structure $H$.
For the sake of clarity, we first discuss the case where the step of
line~\ref{ln:sparsify} of Algorithm~\ref{alg:f_path} is omitted, i.e., we
simply set $k=1$ and $U^\prime = U$. At the end of this section we will argue about the general
case.

For each path failure $F$ of $|F| \leq f$ edges, and for each target
vertex $t$, we will consider a suitable path $P$ in $G-F$, whose length is at
most $(2|F|+1)$ times the distance $d_{G-F}(t)$.
Then, since $P$ might not be entirely contained in $H-F$, we will show that its
length
must be an upper bound to the length a path $Q$ in $H-F$  between $r$ an $t$,
and hence to $d_{H-F}(t)$.

We first discuss how $P$ is defined: consider the forest $\C$ of the connected
components of $T-F$.
Let $\pi = \pi_{G-F}(r)$, let $r_0 = r$, and let $t_0$ be the last vertex of
$\pi$ belonging to $T_0$. W.l.o.g., we assume $t \not\in V(T_0)$, as otherwise
we have $d_{H-F}(t) = d_{G-F}(t)$. Moreover, we call $(t_0, s_1)$
the edge following vertex $t_0$ in $\pi$.

Initially, we set $P_0 = \pi_T(s,t_0) \circ (t_0, s_1)$ and $i=1$. We proceed
iteratively: Let $T_i$ be the subtree of $C$ which contains $s_i$ and let $t_i$
be the last vertex of $\pi$ such that $t_i$ belongs to $T_i$, i.e., $t_i$ is in
the same subtree as $s_i$ (notice that, it may be that $s_i=t_i)$. Call
$r_i$ the root of $T_i$. If $t_i = t$ we set $P = P_{i-1} \circ \pi_T(s_i, r_i)
\circ \pi_T(r_i, t_i)$, and we are done. Otherwise, let $(t_i, s_{i+1})$ be the
edge following $t_i$ in $\pi$. We set $P_i = P_{i-1} \circ \pi_T(s_i, r_i)
\circ
\pi_T(r_i, t_i) \circ (t_i, s_{i+1})$, we increment $i$ by one, and we repeat
the whole procedure. Figure~\ref{fig:aux_path} shows an example of such a path
$P$. Let $h$ be the final value of $i$, at the end of this procedure, so that
$t=t_h \in V(T_h)$.
\begin{figure}
\centering
\scalebox{0.8}{\includegraphics{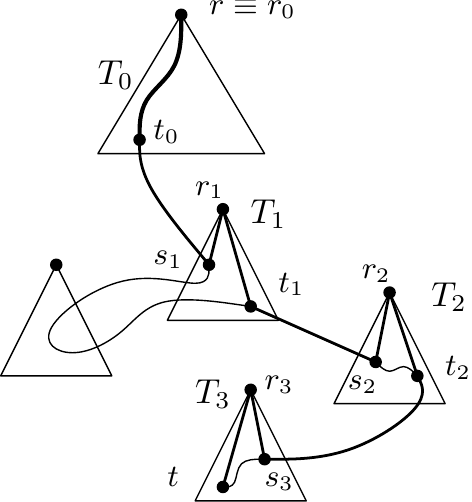}}
\caption{Example of construction of $P$.
The path $P$ is shown in bold, while the path $\pi$ is composed of both the
light subpaths and of the bold edges with endpoint in different subtrees. In this
example $P$ traverses $4$ subtrees and hence $h=3$.}
\label{fig:aux_path}
\end{figure}
Notice that, by construction, the path $P$ does not contain any failed edge.
We now argue that the length $w(P)$ of $P$, is always at most
$(2|F|+1)$ times the distance $d_{G-F}(t)$.

\begin{lemma}
\label{lemma:f_path:stretch_1}
$d_P(t) \leq (2|F|+1) \cdot d_{G-F}(t)$, for every $t \in V(G)$.
\end{lemma}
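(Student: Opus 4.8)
The plan is to bound $w(P)$ by decomposing it according to the subtrees $T_0, T_1, \dots, T_h$ that $P$ traverses, and to compare each piece to a corresponding piece of the genuine shortest path $\pi = \pi_{G-F}(r)$. Recall that $P$ is built from tree paths inside the subtrees plus the ``crossing'' edges $(t_i, s_{i+1})$ of $\pi$, while $\pi$ itself consists of those same crossing edges together with honest subpaths inside each $T_i$ from $s_i$ to $t_i$ (with $s_0 := r$). So the crossing edges are common to both $P$ and $\pi$ and cost nothing extra; the whole stretch comes from replacing, inside each subtree $T_i$, the $\pi$-subpath $\pi[s_i,t_i]$ by the detour $\pi_T(s_i,r_i)\circ\pi_T(r_i,t_i)$ through the root $r_i$.

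The key inequality I would prove is that for each $i$, $w(\pi_T(s_i,r_i)) + w(\pi_T(r_i,t_i)) = (d(s_i) - d(r_i)) + (d(t_i) - d(r_i)) \le d(s_i) + d(t_i)$, using that these are tree paths along ancestor chains and $d(r_i)\ge 0$. Next, since $s_i$ and $t_i$ both lie on $\pi = \pi_{G-F}(r)$ with $s_i$ appearing before $t_i$, and $\pi$ is a shortest path in $G-F$, I have $d_{G-F}(s_i) \le d_{G-F}(t_i) \le d_{G-F}(t)$; moreover $d(s_i) = d_G(s_i) \le d_{G-F}(s_i)$ (removing edges cannot decrease distances) and similarly for $t_i$. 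Hence $w(\pi_T(s_i,r_i)) + w(\pi_T(r_i,t_i)) \le 2\,d_{G-F}(t)$. Summing over the $i=1,\dots,h$ subtree-detours (the piece in $T_0$ is just $\pi_T(r,t_0)$, of length $d(t_0)\le d_{G-F}(t)$, and one of the ``two'' terms for $i=h$ is $\pi_T(r_h,t_h)=\pi_T(r_h,t)$ with the $s_h$ term also present) and adding $w(\pi)$'s crossing edges, which total at most $d_{G-F}(t)$, should give $w(P) \le (2h+1)\,d_{G-F}(t)$, and I then observe $h \le |F|$ because each new subtree $T_i$ is entered via a distinct failed edge — actually via the crossing edge $(t_{i-1},s_i)$, and $P$ can traverse at most $|F|+1$ subtrees since $T-F$ has exactly $|F|+1$ components. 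Finally $d_P(t) \le w(P)$ because $P$ is a path from $r$ to $t$, giving $d_P(t) \le (2|F|+1)\,d_{G-F}(t)$.

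The main obstacle I anticipate is the careful bookkeeping at the two ends of the induction: making precise that the subpath of $P$ lying in $T_0$ contributes only $d(t_0)$ (one ``$d(t_i)$-type'' term, not two, since there is no root-detour out of $T_0$ — $P$ starts at $r=r_0$), and symmetrically that in $T_h$ the target piece $\pi_T(r_h,t)$ pairs with the $\pi_T(s_h,r_h)$ piece to contribute the last $2\,d_{G-F}(t)$, so that the count of ``$2\,d_{G-F}(t)$'' blocks is exactly $h$ rather than $h+1$. One must also confirm that the bound $h \le |F|$ is tight enough: each of the $h$ detour-subtrees $T_1,\dots,T_h$ is distinct and distinct from $T_0$, and $T-F$ has only $|F|+1$ components, so $h+1 \le |F|+1$, i.e. $h \le |F|$. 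Assembling these pieces, $w(P) \le d(t_0) + \sum_{i=1}^{h}\bigl(w(\pi_T(s_i,r_i))+w(\pi_T(r_i,t_i))\bigr) + \sum_{i} w(t_i,s_{i+1}) \le d_{G-F}(t) + 2h\,d_{G-F}(t) + d_{G-F}(t) \cdot[\text{crossing edges}]$; a small amount of care is needed to see the crossing edges plus the $T_0$-piece together are already accounted inside $w(\pi) \le d_{G-F}(t)$, yielding the clean factor $2|F|+1$ rather than something like $2|F|+2$.
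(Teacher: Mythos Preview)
Your argument is correct and rests on the same decomposition as the paper: split $P$ into the tree segment inside $T_0$, the crossing edges $(t_i,s_{i+1})$ shared with $\pi$, and the detours $\pi_T(s_i,r_i)\circ\pi_T(r_i,t_i)$ inside each $T_i$; then observe $h\le |F|$. Your final assembly is also right: the $T_0$-piece together with all crossing edges is dominated by $w(\pi)=d_{G-F}(t)$, and each of the $h$ detours costs at most $2\,d_{G-F}(t)$.

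The execution differs from the paper's in two small but genuine ways. First, the paper argues by induction on $i$, establishing the sharper per-vertex statement $d_P(t_i)\le(2i+1)\,d_{G-F}(t_i)$, whereas you go for a direct global sum and only bound everything against $d_{G-F}(t)$. Second, the key detour estimate is different: the paper writes $d_G(s_i,r_i)+d_G(r_i,t_i)\le d_G(s_i,t_i)+2\,d_G(r_i,t_i)$ via the triangle inequality and then absorbs $d_G(s_i,t_i)$ into the running prefix of $\pi$, while you use the ancestor relation directly to get $d_T(s_i,r_i)+d_T(r_i,t_i)=d(s_i)+d(t_i)-2d(r_i)\le d(s_i)+d(t_i)\le 2\,d_{G-F}(t)$. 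Your route is slightly more elementary and sidesteps the inductive bookkeeping; the paper's route yields the finer intermediate bounds at each $t_i$, which are not needed here but could be useful elsewhere.
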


\begin{proof}
We proceed by showing, by induction on $i$, that $d_P(t_i) \leq
(2i+1) \cdot d_{G-F}(t_i)$. The claim follows since $t=t_h$ and $h \le |F|$.

The base case is trivially true, as we have $d_P(t_0) = 1 \cdot d_{G-F}(t_0)$,
since $t_0$ belongs to the same subtree $T_0$ as $r$. Now, suppose that the
claim is true for $i-1$. We can prove that it is true also for $i$ by writing:
\begin{align*}
  d_P(t_i) & = d_P(t_{i-1}) + d_P(t_{i-1}, s_i) + d_P(s_i, r_i) + d_P(r_i, t_i)
\\
  & \le (2i-1) \cdot d_{G-F}(t_{i-1}) + d_{G-F}(t_{i-1}, s_i) + d_G(s_i, r_i) +
d_G(r_i, t_i) \\
  & \le (2i-1) \cdot d_{G-F}(t_{i-1}) + d_{G-F}(t_{i-1}, s_i) + d_G(s_i, t_i) +
2 d_G(r_i, t_i) \\
  & \le (2i-1) \cdot d_{G-F}(t_i) +  2 d_{G}(t_i)) \le (2i+1) \cdot
d_{G-F}(t_i).
\end{align*}
\end{proof}

It remains to show that, even though $P$ might not be entirely contained in
$H-F$,
its length $w(P)$ is always an upper bound to $d_{H-F}(t)$.

Let $v$ be the deepest endpoint (w.r.t. level) among the endpoints of the edges
in $F$. Moreover, let $F^*$ be the set of failed edges considered by
Algorithm~\ref{alg:f_path} when $v$ is examined at line~\ref{ln:main_loop}, and
let $U$ be the the corresponding auxiliary graph. Notice that $F \subseteq F^*$
as $F^*$ always contains $\min\{ \ell(v), f \}$ edges. As a consequence, $T_0
\in \C$ contains, in general, several trees in $\C^*$. We let $R$ be the set of
the roots of all the subtrees of $T_0$ which are in $\C^*_0$. Notice that every
other tree $T_j \in C$ such that $T_j \neq T_0$ belongs to $\C^*$ (see
Figure~\ref{fig:path_to_ri}).

Remember that $r_h$ is the root of the subtree $T_h \in \C^* = T-F^*$ which
contains $t$. Let $r^\prime_0$ be the root of the last tree $T^\prime_0 \in
\C^*$ which is contained in $T_0$ and is traversed by $\pi_{G-F}(r_h)$. It
follows that $r^\prime_0 \in V(P)$. We now construct another path $Q$, which
will be entirely contained in $H-F$. We choose a special vertex $r^*_0 \in R$,
as follows:
\begin{equation}
	\label{eq:choice_of_r^*_0}
	r^*_0 = \arg \min_{z \in R} \{ d_T(z) + d_U(z, r_h) \}.
\end{equation}

\begin{figure}
	\centering
	\scalebox{0.72}{\includegraphics{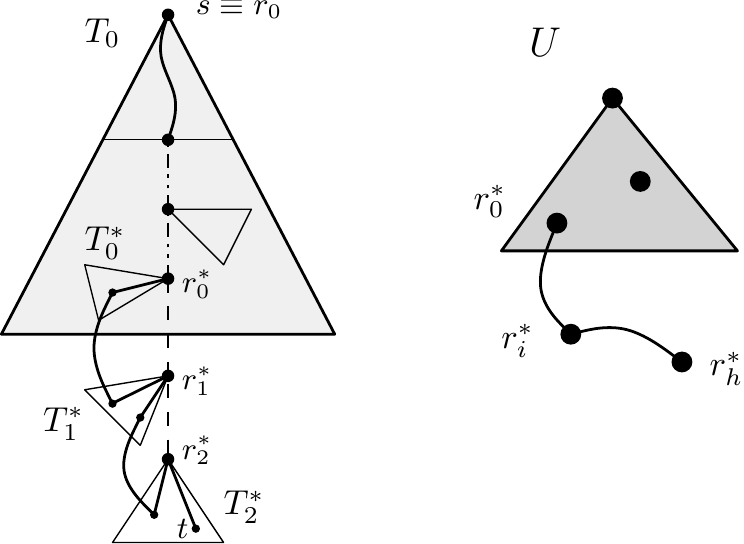}}
	\caption{An example of path $Q$ contained in $H-F$ (left) and of the corresponding
    edges of $U$ (right). The length  of $Q$ is upper-bounded by that of $P$.}
	\label{fig:path_to_ri}
\end{figure}

The path $Q$ is composed of three parts, i.e. $Q = Q_1 \circ Q_2 \circ Q_3$.
The
first one, $Q_1$, coincides with $\pi_T(r^*_0)$. The second one is obtained by
considering the shortest path $\pi_U(r^*_0, r_h)$ and by replacing each edge
going from a vertex $r^*_i \in V(U)$ to a vertex $r^*_j \in V(U)$ with the
path:
$\pi_T(r^*_i, x') \circ (x',y') \circ \pi_T(x', r^*_j)$, where $(x',y')$ is the
edge associated to $(r^*_i, r^*_j)$ by Algorithm~\ref{alg:f_path} when $v$ is
considered. Finally, $Q_3 = \pi_T(r^*_h, t)$. In Figure~\ref{fig:path_to_ri},
we show an example of how such path $Q$ can be obtained.
We now prove that:
\begin{lemma}
\label{lemma:f_path:stretch_2}
$d_{H-F}(r,t) \le w(Q) \le w(P)$
\end{lemma}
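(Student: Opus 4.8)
The plan is to prove the two inequalities $d_{H-F}(r,t) \le w(Q)$ and $w(Q) \le w(P)$ separately, since the first is essentially structural and the second is the quantitative heart of the lemma.

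For $d_{H-F}(r,t) \le w(Q)$, the point is simply that $Q$ is a genuine walk from $r$ to $t$ in $H-F$. I would verify this piece by piece: $Q_1 = \pi_T(r^*_0)$ lies in $T \subseteq H$ and avoids $F$ because $r^*_0 \in R$ means $r^*_0$ is the root of a subtree of $T_0$, so the tree path from $r$ to $r^*_0$ stays inside $T_0$ and touches no failed edge. For $Q_2$, each edge $(r^*_i, r^*_j)$ of the $U$-path gets expanded into $\pi_T(r^*_i, x') \circ (x',y') \circ \pi_T(x', r^*_j)$; the tree portions lie inside $T^*_i$ and $T^*_j$ respectively (hence in $H$ and disjoint from $F \subseteq F^*$), and the crossing edge $(x',y')$ was explicitly added to $H$ when $v$ was processed (it is the edge associated to $(r^*_i,r^*_j) \in E(U)$, and $U' = U$ in the $k=1$ case), and $(x',y') \notin F^*$ by the definition of $E_{i,j}$. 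Finally $Q_3 = \pi_T(r^*_h,t)$ lies inside $T^*_h = T_h$, again in $H-F$. Concatenating, $Q$ is a walk in $H-F$ from $r$ to $t$, so its weight upper-bounds $d_{H-F}(r,t)$; and $w(Q)$ equals $d_T(r^*_0) + d_U(r^*_0, r_h) + d_T(r^*_h, t)$ by construction of the three parts, where I have used $r^*_h = r_h$ (the tree of $\C^*$ containing $t$ is $T_h$, which also appears in $\C$).

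For $w(Q) \le w(P)$, I would first rewrite $w(Q) = d_T(r^*_0) + d_U(r^*_0, r_h) + d_T(r_h, t)$ and then use the defining choice~\eqref{eq:choice_of_r^*_0} of $r^*_0$ to bound $d_T(r^*_0) + d_U(r^*_0, r_h) \le d_T(r'_0) + d_U(r'_0, r_h)$, where $r'_0$ is the root of the last tree of $\C^*$ inside $T_0$ traversed by $\pi_{G-F}(r_h)$ — a vertex that, crucially, lies on $P$. So it suffices to show $d_T(r'_0) + d_U(r'_0, r_h) + d_T(r_h, t) \le w(P)$. Here I would walk along $P$ from $r'_0$ onward: $P$ after $r'_0$ crosses a sequence of trees of $\C$ (equivalently of $\C^*$, once we are past $T_0$) via edges $(t_i, s_{i+1})$, detouring through each root $r_i$; each such crossing, together with the tree detours $\pi_T(s_i, r_i) \circ \pi_T(r_i, t_{i+1}\text{-side})$, has length at least $d_U(r_i, r_{i+1})$ — indeed this is exactly the quantity Algorithm~\ref{alg:f_path} minimizes in line~\ref{ln:formula} to set $w_U(r^*_i, r^*_j)$, so the portion of $P$ realizing the transition between consecutive roots dominates the corresponding $U$-edge weight. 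Chaining these along the root-sequence visited by $P$ gives an upper bound $d_U(r'_0, r_h)$ on the middle portion of $P$, while the initial piece $\pi_T(s, r'_0)$ of $P$ dominates $d_T(r'_0)$ and the final detour $\pi_T(r_h, t)$ is exactly $d_T(r_h,t) = d_T(r^*_h, t) = w(Q_3)$.

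The main obstacle I anticipate is the careful bookkeeping between the two forests $\C$ (splitting by $F$) and $\C^*$ (splitting by the larger $F^* \supseteq F$): the path $P$ was defined relative to $\C$ and visits "coarse" roots $r_i$, whereas $U$ and $Q$ live over the "fine" roots of $\C^*$. I need to argue that for $i \ge 1$ the two notions coincide (every tree of $\C$ other than $T_0$ is a tree of $\C^*$), and that within $T_0$ the relevant vertex is the fine root $r'_0 \in R$ reached by $P$, which is covered by~\eqref{eq:choice_of_r^*_0}. Matching each transition edge of $P$ to the correct $U$-edge and invoking the minimality of line~\ref{ln:formula} at precisely the right pair of roots is where the proof needs the most care; the triangle-inequality-style estimates themselves are routine.
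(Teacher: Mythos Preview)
Your proposal is correct and follows essentially the same route as the paper: both arguments first observe that $Q$ lies in $H-F$, then decompose $P$ at $r'_0$ and $r_h$, note $P_3 = Q_3$, and invoke the minimizing choice in~\eqref{eq:choice_of_r^*_0} to bound $w(Q_1)+w(Q_2)$ by $w(P_1)+w(P_2)$. The paper compresses the comparison $d_T(r'_0) + d_U(r'_0,r_h) \le w(P_1)+w(P_2)$ into one sentence, whereas you (correctly) unpack it edge-by-edge by matching each $P$-transition $\pi_T(r_i,t_i)\circ(t_i,s_{i+1})\circ\pi_T(s_{i+1},r_{i+1})$ against the minimality in line~\ref{ln:formula}; the $\C$-versus-$\C^*$ bookkeeping you flag as the main obstacle is exactly the point the paper leaves implicit.
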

\begin{proof}
Notice that the path $Q$ is in $H$ and does not contain any failed edge, hence
$d_{H-F}(r,t) \le w(Q)$ is trivially true.

To prove $w(Q) \le w(P)$, notice that $P$ can also be decomposed into the three
subpaths $P_1 = P[r, r^\prime_0]$, $P_2 = P[r^\prime_0, r_h]$ and $P_3 =
P[r_h,t]$. We have that that $P_3 = Q_3$ and that the endpoints of $P_2$ coincide
with the endpoints of $Q_2$.
By the choice of $r^*_0$, we must have $w(Q_1) + w(Q_2) \le w(P_1) + w(P_2)$ as
the (weighted length of) path $P_1 \circ P_2$ is considered in equation
\eqref{eq:choice_of_r^*_0} when $z=r^\prime_0$.
This implies that $w(Q) = w(Q_1) + w(Q_2) + w(Q_3) \le w(P_1) + w(P_2) + w(P_3) = w(P)$.
\end{proof}

By combining~Lemma~\ref{lemma:f_path:size} with
Lemma~\ref{lemma:f_path:stretch_1} and \ref{lemma:f_path:stretch_2}, it
immediately follows:
\begin{theorem}
Algorithm \ref{alg:f_path} computes, in $O(n(m+f^2))$ time, a
\pspt{(2|F|+1)} of size $O(n f^2)$, for any $|F|\le f$.
\end{theorem}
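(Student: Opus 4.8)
The plan is to assemble the final theorem by citing the three lemmas that have already been established for the case $k=1$ (where the optional spanner step of line~\ref{ln:sparsify} is skipped and $U' = U$). First I would invoke the running-time lemma to get the $O(n(m+f^2))$ bound, and Lemma~\ref{lemma:f_path:size} with $k=1$ to get the size bound $O(n f^2)$. The only real content is the stretch bound, which follows by chaining Lemma~\ref{lemma:f_path:stretch_1} and Lemma~\ref{lemma:f_path:stretch_2}: for an arbitrary path failure $F$ with $|F|\le f$ and an arbitrary target $t$, these give
\[
d_{H-F}(r,t) \;\le\; w(Q) \;\le\; w(P) \;=\; d_P(t) \;\le\; (2|F|+1)\cdot d_{G-F}(t),
\]
which is exactly the definition of $H$ being a \pspt{(2|F|+1)}.

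Before writing that chain, I would dispose of the degenerate cases separately, since the construction of $P$ in Lemma~\ref{lemma:f_path:stretch_1} tacitly assumes $t \notin V(T_0)$: if $t$ lies in the same component $T_0$ as $r$ after the failure, then the tree path $\pi_T(r,t)$ survives in $H-F$ and $d_{H-F}(t) = d_T(t) = d_{G-F}(t)$, so the bound holds trivially. Similarly, if $|F|=0$ there is nothing to prove. With those handled, the generic case is precisely what the two stretch lemmas cover, so the theorem is immediate.

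The step I expect to require the most care in the write-up is making sure the quantification matches: Lemma~\ref{lemma:f_path:stretch_1} and Lemma~\ref{lemma:f_path:stretch_2} were stated for a \emph{fixed} but arbitrary path failure $F$ and target $t$, and I must be explicit that the theorem's guarantee is obtained by ranging over all such $F$ (of size at most $f$) and all $t \in V(G)$ — in particular, Lemma~\ref{lemma:f_path:stretch_2} crucially uses that Algorithm~\ref{alg:f_path} processed the vertex $v$ equal to the deepest endpoint of the edges in $F$, so one needs $f$ to be an upper bound on $|F|$ and the algorithm to have been run with that $f$. Since each of these lemmas is already proved, the final argument is just a one-line composition; the honest work has been done upstream. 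I would then remark that the general $k \ge 1$ case (where line~\ref{ln:sparsify} is active) is deferred to the argument that follows, where the extra $(2k-1)$ factor from the spanner is folded in, yielding the full \pspt{(2k-1)(2|F|+1)} of size $O(kn f^{1+1/k})$.

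\begin{proof}
Immediate from the preceding lemmas, in the case $k=1$ (so that $U' = U$ and line~\ref{ln:sparsify} is skipped). The running time is $O(n(m+f^2))$ by the first lemma of this section, and $|E(H)| = O(n f^2)$ by Lemma~\ref{lemma:f_path:size} with $k=1$.

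It remains to verify the stretch. Fix any path failure $F$ with $0 < |F| \le f$ and any target $t \in V(G)$. If $t$ lies in the component $T_0 \in \C$ of $T-F$ containing $r$, then $\pi_T(r,t)$ is present in $H-F$ and equals a shortest path in $G-F$, so $d_{H-F}(t) = d_{G-F}(t) \le (2|F|+1)\cdot d_{G-F}(t)$. Otherwise $t \notin V(T_0)$, and the path $P$ of Lemma~\ref{lemma:f_path:stretch_1} together with the path $Q$ of Lemma~\ref{lemma:f_path:stretch_2} are well defined; chaining the two bounds gives
\[
d_{H-F}(r,t) \;\le\; w(Q) \;\le\; w(P) \;=\; d_P(t) \;\le\; (2|F|+1)\cdot d_{G-F}(t).
\]
Since $F$ and $t$ were arbitrary, $H$ is a \pspt{(2|F|+1)} of $G$ rooted at $r$.
\end{proof}
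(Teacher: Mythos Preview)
Your proof is correct and follows the same approach as the paper, which simply states that the theorem is immediate from combining Lemma~\ref{lemma:f_path:size} with Lemmas~\ref{lemma:f_path:stretch_1} and~\ref{lemma:f_path:stretch_2}. Your write-up is in fact more careful than the paper's, explicitly handling the degenerate case $t \in V(T_0)$ and spelling out the quantification over $F$ and $t$.
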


We now relax the assumption that $U = U^\prime$. Indeed, if $k \neq 1$,
Algorithm~\ref{alg:f_path} computes, in line~\ref{ln:sparsify}, a
$(2k-1)$--spanner $U'$ of the graph $U$.
In this case, we can construct a path $Q^\prime$ in a similar way as we did for $Q$,
with the exception that we now use the graph $U^\prime$ instead of $U$. Once we do so,
it is easy to prove that a more general version of
Lemma~\ref{lemma:f_path:stretch_2} holds:

\begin{lemma}\label{lemma:f_path:stretch_2-bis}
  $d_{H-F}(r,t) \le  (2k-1) w(Q^\prime) \le  (2k-1)  w(P)$
\end{lemma}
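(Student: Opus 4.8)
The plan is to adapt the proof of Lemma~\ref{lemma:f_path:stretch_2} almost verbatim, inserting the spanner's multiplicative distortion at exactly the one place where it is needed. First I would define $Q^\prime = Q_1^\prime \circ Q_2^\prime \circ Q_3^\prime$ in complete analogy with $Q$: set $Q_1^\prime = \pi_T(r^*_0)$ (with the same choice of $r^*_0$, except that the minimization in \eqref{eq:choice_of_r^*_0} now uses $d_{U^\prime}$ in place of $d_U$), obtain $Q_2^\prime$ from the shortest path $\pi_{U^\prime}(r^*_0, r_h)$ in the spanner by expanding each spanner edge $(r^*_i, r^*_j)$ into its associated detour $\pi_T(r^*_i, x^\prime) \circ (x^\prime, y^\prime) \circ \pi_T(x^\prime, r^*_j)$, and set $Q_3^\prime = \pi_T(r^*_h, t)$. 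Since the endpoints of every edge of $U^\prime$ are roots of components of $\C^*$ lying in $H$, and each associated edge and its two tree-paths are in $H$ and avoid $F$, the path $Q^\prime$ lies in $H-F$; hence $d_{H-F}(r,t) \le w(Q^\prime)$ is immediate, exactly as before.

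Next I would show $w(Q^\prime) \le (2k-1)\,w(P)$. The key observation is that expanding a spanner edge $(r^*_i,r^*_j)$ into its detour produces a subpath of $H$ of total weight exactly $w_{U^\prime}(r^*_i,r^*_j) = w_U(r^*_i,r^*_j)$ (the spanner keeps the same weights on the edges it retains), so $w(Q_2^\prime) = d_{U^\prime}(r^*_0, r_h)$, and similarly $w(Q_1^\prime) + w(Q_2^\prime) = d_T(r^*_0) + d_{U^\prime}(r^*_0,r_h) = \min_{z\in R}\{ d_T(z) + d_{U^\prime}(z,r_h)\}$ by the (updated) choice of $r^*_0$. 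Now I use the spanner guarantee: for the vertex $r^\prime_0 \in R$ traversed by $\pi_{G-F}(r_h)$, we have $d_{U^\prime}(r^\prime_0, r_h) \le (2k-1)\, d_U(r^\prime_0, r_h)$. Combining, $w(Q_1^\prime) + w(Q_2^\prime) \le d_T(r^\prime_0) + (2k-1)\, d_U(r^\prime_0, r_h) \le (2k-1)\big( d_T(r^\prime_0) + d_U(r^\prime_0,r_h)\big)$, and the proof of Lemma~\ref{lemma:f_path:stretch_2} already established that $d_T(r^\prime_0) = w(P_1)$ and $d_U(r^\prime_0, r_h) \le w(P_2)$ (this is the content of the ``$P_1 \circ P_2$ is considered in \eqref{eq:choice_of_r^*_0} when $z = r^\prime_0$'' argument, which here relies on the fact that $d_U(r^\prime_0,r_h)$ is at most the weight of the corresponding $U$-path witnessed by $P_2$). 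Hence $w(Q_1^\prime)+w(Q_2^\prime) \le (2k-1)(w(P_1)+w(P_2))$. Finally $w(Q_3^\prime) = w(Q_3) = w(P_3) \le (2k-1)\,w(P_3)$ since $2k-1 \ge 1$, and adding the three pieces gives $w(Q^\prime) \le (2k-1)\,w(P)$, which with the first inequality yields the claim.

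The only subtle point — and the one I would spell out carefully — is the chain $d_U(r^\prime_0, r_h) \le w(P_2)$. This is not new: it is the same fact used implicitly in Lemma~\ref{lemma:f_path:stretch_2}, namely that the subpath $P_2$ of $P$ from $r^\prime_0$ to $r_h$ decomposes, along the components of $\C^*$ it visits, into pieces of the form ``$T$-path $\circ$ crossing edge $\circ$ $T$-path'', each crossing edge $(t_i, s_{i+1})$ of which belongs to some $E_{i,j}$, so that each corresponding segment of $P_2$ has weight at least $w_U(r^*_i, r^*_{i+1})$ by the $\arg\min$ in line~\ref{ln:formula}; concatenating, $w(P_2) \ge d_U(r^\prime_0, r_h)$. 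I do not expect any genuine obstacle here — it is a direct lift of the $k=1$ argument with the spanner stretch factored through the one inequality where the spanner replaces $U$ — so the work is bookkeeping rather than a new idea. I would close by remarking that combining Lemma~\ref{lemma:f_path:stretch_1} with Lemma~\ref{lemma:f_path:stretch_2-bis} and Lemma~\ref{lemma:f_path:size} yields the claimed $(2k-1)(2|F|+1)$-\texttt{PASPT} of size $O(kn f^{1+1/k})$.
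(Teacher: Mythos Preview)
Your argument is correct and is exactly the natural extension the paper has in mind; the paper itself gives no proof here beyond the sentence ``it is easy to prove that a more general version of Lemma~\ref{lemma:f_path:stretch_2} holds,'' and your write-up supplies precisely those details (including the decomposition of $P_2$ along $\C^*$-components that Lemma~\ref{lemma:f_path:stretch_2} used only implicitly). One cosmetic remark: you actually establish the sharper chain $d_{H-F}(r,t)\le w(Q')\le (2k-1)\,w(P)$, whereas the lemma as printed carries an extra $(2k-1)$ in front of $w(Q')$ in the first inequality; your version is the cleaner statement and of course implies the printed one, and both yield the needed $d_{H-F}(r,t)\le (2k-1)\,w(P)$.
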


Lemma \ref{lemma:f_path:stretch_2-bis}, combined with Lemma~\ref{lemma:f_path:stretch_1},
immediately implies that $d_{H-F}(r,t) \le (2k-1)(2|F|+1) d_{G-F}(r,t)$. This discussion
allows us to show an interesting trade-off between the size of the returned
structure and the multiplicative stretch provided, as summarized by the
following theorem:

\begin{theorem}
Let $k\ge 1$ be an integer. Then, Algorithm \ref{alg:f_path} can compute, in
$O(n(m+f^2))$ time, a \pspt{(2k-1)(2|F|+1)} of size $O(n k\cdot
f^{1+\frac{1}{k}})$.
\end{theorem}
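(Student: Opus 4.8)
The plan is to assemble the final theorem from the three earlier lemmas with essentially no new work. First I would invoke the running-time lemma to get the $O(n(m+f^2))$ bound, which already covers the cost of the main loop, the $O(m)$-time weight computation for the auxiliary graph $U$, and the $O(f^2)$-time spanner construction of line~\ref{ln:sparsify}; the only thing to check is that computing a $(2k-1)$-spanner of an $O(f)$-vertex graph is indeed $O(f^2)$, which is exactly the cited $O(n^2)$-time construction specialized to $n=O(f)$. Then I would invoke Lemma~\ref{lemma:f_path:size} verbatim for the size bound $O(kn\cdot f^{1+1/k})$.

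For the stretch, I would chain Lemma~\ref{lemma:f_path:stretch_1} with Lemma~\ref{lemma:f_path:stretch_2-bis}. Fix an arbitrary path failure $F$ with $|F|\le f$ and a target vertex $t$. Lemma~\ref{lemma:f_path:stretch_1} produces the path $P$ in $G-F$ with $w(P)=d_P(t)\le (2|F|+1)\,d_{G-F}(t)$. Lemma~\ref{lemma:f_path:stretch_2-bis} then gives $d_{H-F}(r,t)\le (2k-1)\,w(Q')\le (2k-1)\,w(P)$. Composing these two inequalities yields $d_{H-F}(r,t)\le (2k-1)(2|F|+1)\,d_{G-F}(r,t)$, which is exactly the defining condition of a \pspt{(2k-1)(2|F|+1)}. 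Since $F$ and $t$ were arbitrary, $H$ is a \pspt{(2k-1)(2|F|+1)}, and this completes the proof.

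The one place that warrants a sentence of justification, and which I expect to be the main (mild) obstacle, is the proof of Lemma~\ref{lemma:f_path:stretch_2-bis} itself — the ``$k\neq 1$'' generalization of Lemma~\ref{lemma:f_path:stretch_2}. Here the point is that the path $Q'$ is built exactly like $Q$, but routing through the spanner $U'$ rather than through $U$: the middle portion $Q'_2$ realizes $\pi_{U'}(r^*_0,r_h)$ by expanding each spanner edge $(r^*_i,r^*_j)$ into the graph path $\pi_T(r^*_i,x')\circ(x',y')\circ\pi_T(x',r^*_j)$ associated by the algorithm. Since $U'$ is a $(2k-1)$-spanner of $U$, we have $d_{U'}(r^*_0,r_h)\le(2k-1)\,d_U(r^*_0,r_h)$; because each $U$-edge weight is precisely the weight of the corresponding expanded graph path, this means $w(Q'_2)\le(2k-1)\,w(Q_2)$ (and the trivial parts satisfy $w(Q'_1)=w(Q_1)$, $w(Q'_3)=w(Q_3)$, both bounded by $2k-1$ times themselves). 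Combining with the choice of $r^*_0$ in~\eqref{eq:choice_of_r^*_0}, exactly as in Lemma~\ref{lemma:f_path:stretch_2}, gives $w(Q')\le w(P)$ and hence $d_{H-F}(r,t)\le w(Q')\le(2k-1)\,w(Q')\le(2k-1)\,w(P)$. Note that $Q'\subseteq H$ and avoids every failed edge since $F\subseteq F^*$, so $d_{H-F}(r,t)\le w(Q')$ remains valid. With Lemma~\ref{lemma:f_path:stretch_2-bis} in hand, the final theorem is immediate.
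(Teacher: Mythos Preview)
Your overall approach is exactly the paper's: assemble the theorem from the running-time lemma, Lemma~\ref{lemma:f_path:size}, and the chain Lemma~\ref{lemma:f_path:stretch_1}~$+$~Lemma~\ref{lemma:f_path:stretch_2-bis}. That skeleton is correct and nothing more is needed for the theorem itself.

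There is, however, a slip in your justification of Lemma~\ref{lemma:f_path:stretch_2-bis}. From $w(Q'_1)=w(Q_1)$, $w(Q'_3)=w(Q_3)$, and the spanner bound $w(Q'_2)\le(2k-1)\,w(Q_2)$ you may conclude $w(Q')\le(2k-1)\,w(Q)$, and then Lemma~\ref{lemma:f_path:stretch_2} gives $w(Q)\le w(P)$, hence $w(Q')\le(2k-1)\,w(P)$. You \emph{cannot} conclude $w(Q')\le w(P)$ as you wrote: the spanner may genuinely lengthen the middle portion by a factor up to $2k-1$, and since $w(Q)$ can equal $w(P)$, the inequality $w(Q')\le w(P)$ fails in general. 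Your subsequent chain ``$d_{H-F}(r,t)\le w(Q')\le(2k-1)\,w(Q')\le(2k-1)\,w(P)$'' papers over this with a vacuous middle step; the honest chain is simply
\[
d_{H-F}(r,t)\;\le\; w(Q')\;\le\;(2k-1)\,w(Q)\;\le\;(2k-1)\,w(P).
\]
(The paper's printed statement of Lemma~\ref{lemma:f_path:stretch_2-bis} appears to carry the same glitch, which may have misled you.) With this correction, combining with Lemma~\ref{lemma:f_path:stretch_1} gives $d_{H-F}(r,t)\le(2k-1)(2|F|+1)\,d_{G-F}(r,t)$ and the theorem follows.
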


\subsection{Oracle Setting}
In what follows, we show how Algorithm \ref{alg:f_path} can be used to compute an approximate 
distance oracle of size $O(n f^2)$ (see Algorithm \ref{alg:f_path_oracle}).
We also show that a smaller-size oracle can be obtained (see Algorithm 
\ref{alg:f_path_oracle_2}) 
if we allow for a slightly larger query time.

\begin{algorithm}[t]
\DontPrintSemicolon
\SetKwInOut{Input}{Input}
\SetKwInOut{Output}{Output}

Preprocess  $T = \tree{G}{r}$ to answer LCA queries as shown in
\cite{HT84}\;
For each vertex $v \in V(G)$, compute and store its level $\ell(v)$.
\BlankLine	

\ForEach{$v \in V(G)$}{
  Let $\langle r = z_0, z_1, \dots, z_\ell(v)$ be the path from $r$ to
  $v$ in $T$\;
  \BlankLine
  Build graph $U$ associated with vertex $v$ as in Algorithm~\ref{alg:f_path} \;
  Compute and store the solution to the all-pairs shortest paths problem on $U$\;
  \BlankLine
  \ForEach{$\eta = 1, \dots, \min\{f, \ell(v)\}$}{
      \ForEach{ $r_h : h > \ell(v)-\eta$ }{
	$R \gets \{ z_i : 0 \le i \le \ell(v)-\eta \}$ \;
	Let $r^*_0$ be the vertex of $R$ minimizing
	Equation~\eqref{eq:choice_of_r^*_0} \;
	Store $r^*_0$ with key $(v, \eta, r_i)$ \label{ln:store_r*_0}\;
      }
  }
}

\caption{Algorithm for building an oracle with constant query time.}
\label{alg:f_path_oracle}
\end{algorithm}

\begin{algorithm}[t]
\DontPrintSemicolon
\SetKwInOut{Input}{Input}
\SetKwInOut{Output}{Output}

Preprocess  $T$ to answer LCA queries as shown in \cite{HT84}\;
For each vertex $v \in V(G)$, compute and store its level $\ell(v)$.
\BlankLine	

\ForEach{$v \in V(G)$}{
Build graph $U$ associated with vertex $v$ as in Algorithm~\ref{alg:f_path} \;
Build and store a distance sensitivity oracle of $U$ with stretch $2k-1$ \;
}
\caption{Algorithm for building an oracle with $O(f)$ query time.}
\label{alg:f_path_oracle_2}
\end{algorithm}

\begin{theorem}
\label{thm:f_path_oracle}
Let $F$ be a path failure of $|F| \le f$ edges and $t \in V(G)$.
Algorithm~\ref{alg:f_path_oracle} builds, in $O(n(m+f^2))$ time, an oracle of size $O(n f^2)$
which is able to return:
\begin{itemize}
\item  a $(2|F|+1)$-approximate distance in $G-F$ between $r$ and $t$ in constant time;
\item  the associated path in a time proportional to the number of its edges.
  \end{itemize}
\end{theorem}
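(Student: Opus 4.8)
The plan is to argue that Algorithm~\ref{alg:f_path_oracle} correctly precomputes, for every possible path failure $F$ and every possible target subtree, all the combinatorial information needed to evaluate the length of the path $Q$ used in the analysis of Algorithm~\ref{alg:f_path} (with $k=1$), and that this information suffices to reconstruct $w(Q)$ in $O(1)$ time and $Q$ itself in time proportional to its number of edges. First I would establish the \emph{preprocessing time and size bounds}. The outer loop runs once per vertex $v$; for each $v$ it builds the auxiliary graph $U$ (of $O(f)$ vertices) exactly as in Algorithm~\ref{alg:f_path}, which costs $O(m)$ time by the same edge-scan argument used in the running-time lemma, and then solves all-pairs shortest paths on $U$, which is $O(f^3)=O(f^2)$ per vertex once we recall $|V(U)|=O(f)$ (and in any case dominated by the $O(m)$ term, or absorbed into $O(m+f^2)$ per iteration). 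The nested loops over $\eta\le\min\{f,\ell(v)\}$ and over the $O(\eta)$ possible deepest-endpoints $r_h$ of the failing subpath examine $O(f^2)$ pairs, and for each pair the minimization in Equation~\eqref{eq:choice_of_r^*_0} over the set $R$ (whose size is at most $\ell(v)\le n$, but for which only the stored $d_T(\cdot)$ values and the precomputed $d_U(\cdot,r_h)$ values are needed) can be carried out in $O(|R|+f)$ time, and even in $O(f)$ amortized time if, while walking down the root-to-$v$ path once, we maintain for each candidate $r_h$ the running minimum. Summing over all $v$ gives $O(n(m+f^2))$ preprocessing time; the stored data consists, per vertex $v$, of the $O(f^2)$ distances in $U$ plus the $O(f^2)$ keyed values $r^*_0$, for a total of $O(nf^2)$ size.

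Next I would describe the \emph{query algorithm and prove correctness of the distance}. Given a path failure $F$ with $|F|\le f$ and a target $t$, the oracle first identifies the deepest endpoint $v$ among the endpoints of the edges of $F$ (using the stored levels, since $F$ is a subpath of $T$), sets $\eta=|F|$, and locates, via an LCA query in $T$ between $v$ and $t$, the root $r_h$ of the component of $T-F$ containing $t$; if $t$ already lies in $T_0$ it returns $d_T(t)$ directly. Otherwise it retrieves the stored vertex $r^*_0$ associated with the key $(v,\eta,r_h)$ and returns
\[
  d_T(r^*_0) \;+\; d_U(r^*_0,r_h) \;+\; d_T(r_h,t),
\]
where the three terms are, respectively, a stored level/distance, a stored APSP entry of $U$, and a distance in the tree $T_h$ computable in $O(1)$ from the levels and an LCA query. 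By construction this quantity equals $w(Q_1)+w(Q_2)+w(Q_3)=w(Q)$ for exactly the path $Q$ analyzed before Lemma~\ref{lemma:f_path:stretch_2}: indeed the choice of $r^*_0$ in the oracle matches Equation~\eqref{eq:choice_of_r^*_0}, the term $d_U(r^*_0,r_h)$ is the length of $\pi_U(r^*_0,r_h)$, and replacing each $U$-edge by its associated detour $\pi_T(r^*_i,x')\circ(x',y')\circ\pi_T(x',r^*_j)$ does not change the total weight because each $U$-edge has weight equal to that detour by construction in Algorithm~\ref{alg:f_path}. Hence the returned value is $w(Q)$, and Lemmas~\ref{lemma:f_path:stretch_2} and \ref{lemma:f_path:stretch_1} give $d_{H-F}(r,t)\le w(Q)\le w(P)\le (2|F|+1)\,d_{G-F}(t)$; trivially $d_{H-F}(r,t)\ge d_{G-F}(t)$, so the value is a $(2|F|+1)$-approximation of $d_{G-F}(t)$. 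All query steps are $O(1)$ given $O(1)$-time LCA queries (from the \cite{HT84} preprocessing).

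Finally I would handle \emph{path reconstruction}. To output $Q$ itself, concatenate: (i) the tree path $\pi_T(r^*_0)$ from $r$, walked edge by edge; (ii) for $Q_2$, recover the vertex sequence of $\pi_U(r^*_0,r_h)$ from the stored APSP information on $U$ (storing parent pointers alongside the distances), and for each consecutive pair $(r^*_i,r^*_j)$ on it emit $\pi_T(r^*_i,x')$, then the edge $(x',y')$, then $\pi_T(x',r^*_j)$, each tree-subpath enumerated in time proportional to its length; (iii) the tree path $\pi_T(r_h,t)$. Each emitted edge is produced with $O(1)$ overhead, so the whole traversal takes $O(\ell)$ time where $\ell$ is the number of edges of $Q$, which is what the statement asks (note $\ell\ge$ the true path length only up to the stretch, but the claim is stated in terms of the \emph{returned} path's size). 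The one step I expect to need the most care is verifying that the oracle's minimization really ranges over the correct set $R$ — the roots of the $\C^*$-subtrees inside $T_0$ — and that these are exactly the vertices $z_i$ with $0\le i\le\ell(v)-\eta$ on the root-to-$v$ path, so that a single downward walk suffices; this is a small bookkeeping argument about how $F$ sits inside $F^*$, already sketched in the discussion preceding Equation~\eqref{eq:choice_of_r^*_0}, and it also underpins why $r_h$ is correctly identified by an LCA query. Everything else is routine application of the structural lemmas already proved.
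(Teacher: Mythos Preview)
Your proposal follows essentially the same approach as the paper: locate $r_h$ via an LCA query between $v$ and $t$ (returning $d_T(t)$ if $t\in T_0$), retrieve the stored $r^*_0$ keyed by $(v,|F|,r_h)$, output $d_T(r^*_0)+d_U(r^*_0,r_h)+d_T(r_h,t)$, and for path queries expand $\pi_U(r^*_0,r_h)$ edge by edge into the associated detours in $T$. The paper's own proof is in fact terser than yours---it only describes the query procedure and appeals implicitly to the earlier lemmas, without separately arguing the preprocessing time or size bounds---so your additional accounting is welcome; just note the arithmetic slip ``$O(f^3)=O(f^2)$'' for the APSP step on $U$ (it is genuinely cubic in $f$, a point the paper itself glosses over), and that the lower bound you need is simply $w(Q)\ge d_{G-F}(t)$ because $Q$ is a path in $G-F$, rather than the detour through $d_{H-F}$.
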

\begin{proof}
In order to answer a query we need to find: (i) the root $r^*_0$ of the
subtree of $\C^*$ which contains $t_0$, (ii) the root $r_h$ of the subtree of
$\C^*$ containing $t$.
In order to find $r_h$, we perform a LCA query on $T$ to find the least common
ancestor $u$ between $v$ and $t$. Either  $\ell(v) \ge \ell(u) > \ell(v) -
|F|$,
in which case $u = r_h$, or $\ell(u)  \le \ell(v)-|F|$ which means that $t$
belongs to $T_0$.
As in the latter case we can simply return $d_T(t)$, we focus on the
former one.
To find $r^*_0$ we look for the vertex associated with the triple $(v,
|F|, r_h)$ stored by Algorithm~\ref{alg:f_path_oracle} at line
\ref{ln:store_r*_0}.

We answer a distance query with the quantity $d_T(r^*_0) +
d_{U^\prime}(r^*_0, r^*_h) + d_T(r_h, t)$, which can be computed in constant
time by accessing the distances stored in shortest path tree $T$, plus the
solution of the APSP problem on $U^\prime$ computed by
Algorithm~\ref{alg:f_path_oracle} when vertex $v$ was considered.

To answer a path query we simply construct, and return, the path $Q$, by
expanding the edges of the graph $U^\prime$ into paths which are in $G-F$, as
explained before. This clearly takes a time proportional to the number of edges
of $Q$.	
\end{proof}

If we allow for a query time that is proportional to $O(f+k)$, we can reduce
the size of the oracle by computing a distance sensitivity oracle (DSO) of $U$
(see
Algorithm~\ref{alg:f_path_oracle_2}). In this case, we can still find vertex
$r_h$ using the LCA query, as shown in the proof of
Theorem~\ref{thm:f_path_oracle}, while vertex $r^*_0$ is guessed among the (up
to) $f$ roots of the trees in $G-F^*$ which are contained in $T_0$. The
resulting oracle is summarized by the following:

\begin{theorem}
\label{thm:f_path_spanned_oracle}
Let $F$ be a path failure of $|F| \le f$ edges, let $t \in V(G)$ and let $k \ge
1$ be an integer.
Algorithm~\ref{alg:f_path_oracle_2} builds, in $O(n(m+f^2))$ time, an oracle of size $O(n k f^{1+\frac{1}{k}})$ which is able to return:
	\begin{itemize}
\item a $(2k-1)(2|F|+1)$-approximate distance in $G-F$ between $r$ and $t$ in $O(f+k)$ time;
\item the corresponding path in $O(\ell + k + f)$ time, where $\ell$ is the 
number of its edges.
	\end{itemize}
\end{theorem}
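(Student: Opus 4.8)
The plan is to verify the three quantitative ingredients of the statement — preprocessing time, oracle size, and query time — and then to argue correctness by reusing, almost verbatim, the distortion analysis already carried out for the structure $H$ (Lemmas~\ref{lemma:f_path:stretch_1}, \ref{lemma:f_path:stretch_2} and~\ref{lemma:f_path:stretch_2-bis}).

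\emph{Preprocessing and size.} The LCA preprocessing of $T$ takes $O(n)$ time~\cite{HT84}, and all levels $\ell(v)$ are computed in $O(n)$ time. For each of the $n$ choices of $v$, the auxiliary graph $U$ (on $O(f)$ vertices and $O(f^2)$ edges) is built exactly as in Algorithm~\ref{alg:f_path}: one scans $E(G)\setminus F^*$ once while maintaining, for every ordered pair of roots of $T-F^*$, the minimum of the expression in line~\ref{ln:formula} — the required $d_T$-values being read off $T$ — which costs $O(m)$ time; building a $(2k-1)$-approximate distance oracle of $U$ then takes $O(|U|^2)=O(f^2)$ time and yields a structure of size $O(k\,|U|^{1+1/k})=O(kf^{1+1/k})$~\cite{DBLP:conf/soda/BaswanaS04,DBLP:journals/talg/BaswanaS06,DBLP:conf/icalp/RodittyTZ05}. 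Summing over $v$ gives $O(n(m+f^2))$ preprocessing time and $O(nkf^{1+1/k})$ total size, the $O(n)$ overhead for $T$, the levels and the LCA structure being absorbed.

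\emph{Answering a query.} Given $F$ — hence its deepest endpoint $v$ and $\eta=\min\{f,\ell(v)\}$ — and a target $t$, first issue an LCA query between $v$ and $t$ in $T$; if the returned vertex $u$ satisfies $\ell(u)\le\ell(v)-|F|$ then $t\in V(T_0)$ and we return $d_T(t)$, otherwise $u$ is the root $r_h$ of the component of $T-F^*$ containing $t$, exactly as in the proof of Theorem~\ref{thm:f_path_oracle}. Since Algorithm~\ref{alg:f_path_oracle_2} does not precompute $r^*_0$, we reconstruct it on the fly: the candidate set $R=\{z_0\}\cup\{z_j:\ell(v)-\eta<j\le\ell(v)-|F|\}$ of roots of the subtrees of $T_0$ that lie in $T-F^*$ has at most $f+1$ elements and is read off the root-to-$v$ path in $O(f)$ time; for each $z\in R$ we query the distance oracle of $U$ to get a value $\widetilde d(z,r_h)$ with $d_U(z,r_h)\le\widetilde d(z,r_h)\le(2k-1)d_U(z,r_h)$, we let $r^*_0$ be a minimizer of $d_T(z)+\widetilde d(z,r_h)$, and we return $A:=d_T(r^*_0)+\widetilde d(r^*_0,r_h)+d_T(r_h,t)$. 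This is one LCA query, $O(f)$ oracle lookups, and $O(1)$ arithmetic, i.e.\ $O(f+k)$ time. For a path query we moreover output $Q'=\pi_T(r^*_0)\circ Q_2\circ\pi_T(r_h,t)$, where $Q_2$ is obtained from the $U$-path returned by the oracle by replacing each edge $(r^*_i,r^*_j)$ with $\pi_T(r^*_i,x')\circ(x',y')\circ\pi_T(y',r^*_j)$, $(x',y')$ being its associated edge; since the $U$-path has $O(f)$ edges and the listed $T$-subpaths are disjoint portions of the output, this runs in $O(\ell+k+f)$ time.

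\emph{Correctness.} For the lower bound, $\pi_T(r^*_0)$ stays inside $T_0$, every edge of $Q_2$'s expansion lies in $G-F^*\subseteq G-F$ and has weight equal to the corresponding $U$-edge, and $\pi_T(r_h,t)$ stays inside the component $T_h$; hence $Q'$ is a genuine $r$–$t$ walk in $G-F$ of weight at most $A$, so $d_{G-F}(t)\le A$. For the stretch, plug $z=r'_0$ — the root of the last $T-F^*$ subtree of $T_0$ met by $\pi_{G-F}(r_h)$, which belongs to $R$ — into the minimization to obtain $A\le d_T(r'_0)+\widetilde d(r'_0,r_h)+d_T(r_h,t)\le(2k-1)\big(d_T(r'_0)+d_U(r'_0,r_h)+d_T(r_h,t)\big)$; as established in the proof of Lemma~\ref{lemma:f_path:stretch_2} the right-hand parenthesis is bounded by $w(P_1)+w(P_2)+w(P_3)=w(P)$, and Lemma~\ref{lemma:f_path:stretch_1} then gives $A\le(2k-1)(2|F|+1)\,d_{G-F}(t)$. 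The hard part is precisely this last step: one must check that replacing both the explicit spanner $U'$ used inside $H$ and the precomputed optimum $r^*_0$ of Algorithm~\ref{alg:f_path_oracle} by on-the-fly oracle lookups loses nothing — that the oracle's multiplicative overestimate commutes with the minimization defining $r^*_0$, and that the returned value is still witnessed by an actual walk of $G-F$ — together with the observation that only $O(f)$ candidate roots need to be inspected, which keeps the query within the claimed bounds. The time and size estimates are then routine bookkeeping.
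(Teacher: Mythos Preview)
Your argument follows the paper's own (one-paragraph) justification almost verbatim: locate $r_h$ with an LCA query as in Theorem~\ref{thm:f_path_oracle}, then enumerate the at most $f$ roots in $R$ and pick the best one via the stored $(2k-1)$-oracle for $U$, and reassemble $Q'$ by expanding the returned $U$-path. The preprocessing, size, and correctness portions are handled exactly as the paper does (indeed with more detail than the paper supplies), invoking Lemmas~\ref{lemma:f_path:stretch_1}--\ref{lemma:f_path:stretch_2-bis} in the same way.

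One arithmetic slip is worth flagging: you write ``$O(f)$ oracle lookups \dots\ i.e.\ $O(f+k)$ time'', but each lookup costs $O(k)$, so the loop as you describe it costs $O(fk)$, not $O(f+k)$; the same issue then propagates to the path-query bound. The paper's sketch asserts the $O(f+k)$ bound without spelling out how the additive form is attained either, so this is a shared loose end rather than a deviation from the paper's approach.
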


\section{Our \pspt{3} Structure for Paths of 2 Edges}
\label{sec:2_structure}
\newcommand{\FirstLast}{\ensuremath{\texttt{FirstLast}}}

In what follows, we provide an algorithm which builds a \pspt{3} (see
Algorithm~\ref{alg:2_path}) for the special case of at most $f = 2$ cascading
edge failures.
This structure improves,  w.r.t. the quality of the stretch, over the general
\pspt{(2|F|+1)} of Section~\ref{sec:f_structure}.

\begin{algorithm}[t]
\DontPrintSemicolon
\SetKwInOut{Input}{Input}
\SetKwInOut{Output}{Output}
\Input{A graph $G$, $r \in V(G)$, an SPT $T = \tree{G}{r}$}
\Output{A \pspt{3} of $G$ rooted at $r$}
\BlankLine	
$H \gets \tree{G}{r}$ \;
$\hat{T}  \gets$ compute a $3$-EASPT of $\tree{G}{r}$ as shown in \cite{NPW03}
\;
$H \gets E(H) \cup E(\hat{T})$  \;
Compute a path decomposition $\mathcal{P}$ of $\tree{G}{r}$ by recursively
applying
Lemma~\ref{lemma:path_decomposition_one_path}\;
\BlankLine
\ForEach{Path $P \in \mathcal{P}$}{
  \ForEach{$x \in V(P) \, : x$ is not a leaf and $x \neq r$}{
  Let $z$ be the (unique) child of $x$ in $P$\;
  Let $\hat{e}$ be the edge connecting $x$ and its parent int $T$\;
  \BlankLine

    \tcp{Protect vertex $x$}
    $E(H) \gets E(H) \cup \FirstLast(\pi_{G-\hat{e}}(x), \subtree{G}{r}{z})$
\label{ln:protecting_x_start}\;
   	\If{$\pi_{G-\hat{e}}(x)$ contains an edge $e^\prime$ in $C(x)$}
   	{
    	$E(H) \gets E(H) \cup \FirstLast(\pi_{G-\hat{e}-e^\prime}(x),
\subtree{G}{r}{z})$ \label{ln:protecting_x_end}\;
    }
    \BlankLine
    \tcp{Protect vertex $z$}
    $E(H) \gets E(H) \cup E(\pi_{G-\hat{e}}(z))$ \label{ln:protecting_z_start}\;
    \ForEach{$e^\prime \in  \{\pi_{G-\hat{e}}(z) \cap C(x)\}$}{
$E(H) \gets E(H) \cup E(\pi_{G - \hat{e}- e^\prime }(z))$
\label{ln:protecting_z_end} \;
	}

    \BlankLine
    \tcp{Protect all the other children of $x$}
    \ForEach{children $z_i$ of $x \, \:\, z_i \neq z$
\label{ln:protecting_children_start}}
    {
    	Let $(u,q)$ be the first edge of $\pi_{G-\hat{e}-(x,z_i)}(x,z_i)$ with
$q \in
V(\subtree{G}{r}{z_i})$\;
    	$E(H) \gets E(H) \cup \{ (u,q) \}$ \label{ln:protecting_children_end} \;
	}
	
    \BlankLine
    \tcp{Protect vertices whose paths that do not contain $x$}
    $T^\prime \gets \subtree{G-x}{r}$ with edges oriented towards the leaves
\label{ln:protecting_other_start} \;
    $E(H) \gets E(H) \cup \{ (x_1, x_2) \in E(T^\prime) \, : \, x_2 \not\in
\subtree{G}{r}{z} \}$ \label{ln:protecting_other_end} \;
	}
}

\Return $H$
\caption{Algorithm for building a \pspt{3} for the case of $f=2$.}
\label{alg:2_path}
\end{algorithm}

The algorithm starts with a 3-\texttt{EASPT} with $O(n)$ edges \cite{NPW03} and
proceeds as follows. As initial building block, it considers a suitable path
$P$ in the shortest-path tree $\tree{G}{r}$, and constructs a structure $H$ that
is able to handle the failure of a pair of edges $\{e_1,e_2\}$, such that $e_1
\in P$, and guarantees $3$-stretched distances from $r$, for each vertex in $G$.
Then, we make use of the following result of \cite{BK13}:
\begin{lemma}[\cite{BK13}]
\label{lemma:path_decomposition_one_path}
There exists an $O(n)$ time algorithm to compute an ancestor-leaf path $Q$ in
$\tree{G}{r}$ whose removal splits $\tree{G}{r}$ into a set of disjoint subtrees
$\subtree{G}{r}{r_1},\dots,\subtree{G}{r}{r_j}$ such that, for each $i\le j$:
\begin{itemize}
\item $| \subtree{G}{r}{r_i}| < n/2$ and $V(Q) \cap V(\subtree{G}{r}{r_i}) =
\emptyset$
\item $\subtree{G}{r}{r_i}$ is connected to $Q$ through some edge for each $i\le
j$
\end{itemize}
\label{lemma:path_dec}
\end{lemma}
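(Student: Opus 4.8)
The plan is to produce the ancestor–leaf path $Q$ by a standard "heavy path" argument. For each vertex $u$ of $\tree{G}{r}$, let $s(u) = |\subtree{G}{r}{u}|$ denote the size of the subtree rooted at $u$. Starting from $r$, I would walk down the tree greedily: at the current vertex $u$, move to the child $c$ of $u$ maximising $s(c)$ (breaking ties arbitrarily), and stop when a leaf is reached. Let $Q$ be the ancestor–leaf path so obtained, and let $\subtree{G}{r}{r_1},\dots,\subtree{G}{r}{r_j}$ be the subtrees hanging off $Q$, i.e.\ the connected components of $\tree{G}{r}$ after deleting the vertices of $Q$; each such component is the subtree rooted at some child $r_i$ of a vertex of $Q$ that does \emph{not} itself lie on $Q$. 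By construction $V(Q) \cap V(\subtree{G}{r}{r_i}) = \emptyset$, and $\subtree{G}{r}{r_i}$ is joined to $Q$ by the tree edge from $r_i$ to its parent on $Q$; this already gives the second bullet and the disjointness clause of the first.

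The key step is the size bound $|\subtree{G}{r}{r_i}| < n/2$. Let $p_i$ be the parent of $r_i$, which lies on $Q$, and let $c_i$ be the child of $p_i$ that the greedy walk actually followed (so $c_i \in V(Q)$, and possibly $c_i = r_i$ is false precisely because $r_i \notin V(Q)$). By the greedy choice, $s(c_i) \ge s(r_i)$. Since $c_i$ and $r_i$ are distinct children of $p_i$, their subtrees are disjoint, so $s(r_i) + s(c_i) \le s(p_i) \le n$, whence $2 s(r_i) \le s(r_i) + s(c_i) \le n$, i.e.\ $s(r_i) \le n/2$. To get the strict inequality one observes that $r \notin V(\subtree{G}{r}{r_i})$ (the root is on $Q$), so $s(p_i) \le n-1$, giving $2s(r_i) \le n-1 < n$ and hence $|\subtree{G}{r}{r_i}| < n/2$ as required.

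For the running time, computing all subtree sizes $s(u)$ takes $O(n)$ time by a single post-order traversal of $\tree{G}{r}$, after which the greedy descent visits $O(n)$ vertices and, for each, inspects its children — $O(n)$ total — and enumerating the hanging subtrees is another $O(n)$ scan. So the whole construction is $O(n)$ time, matching the statement. I expect the only subtle point to be making the strict "$< n/2$" bound airtight (as opposed to "$\le n/2$"), which is handled by the observation that the root always lies on $Q$; everything else is a routine heavy-path computation, and since the lemma is quoted from \cite{BK13} a short self-contained argument of this form suffices.
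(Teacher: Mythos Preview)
The paper does not actually prove this lemma: it is quoted verbatim from \cite{BK13} and used as a black box, so there is no ``paper's own proof'' to compare against. Your heavy-path construction is exactly the standard argument behind this result and is correct in outline; it would serve perfectly well as a self-contained proof.

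One small wobble: your justification of the \emph{strict} inequality is not quite right. You write ``$r \notin V(\subtree{G}{r}{r_i})$, so $s(p_i) \le n-1$'', but if $p_i = r$ then $s(p_i) = n$. The clean fix is to observe that $p_i$ itself lies in its own subtree but in neither $\subtree{G}{r}{r_i}$ nor $\subtree{G}{r}{c_i}$, so $s(r_i) + s(c_i) \le s(p_i) - 1 \le n - 1$, whence $2s(r_i) \le n-1 < n$. With that correction the argument is complete.
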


This allows us to incrementally add edges to $H$ by considering a set
$\mathcal{P}$ of edge-disjoint paths. This set can be obtained by
recursively using the path decomposition technique of
Lemma~\ref{lemma:path_dec} on the shortest-path tree $\tree{G}{r}$. We show
that, in this way, we are able to build a \pspt{3} of size $O(n \log n)$.
Given a path $\pi = \langle s, \dots, t \rangle$ and a tree $T^\prime$, we
denote by $\FirstLast(\pi, T^\prime)$ the edges of the subpaths of $\pi$ going
(i) from $s$ to the first vertex of $\pi$ in $V(T^\prime)$, and (ii) from the
last vertex of $\pi$ in $V(T^\prime)$ to $t$. If these vertices do not exists,
i.e., $V(\pi) \cap V(T^\prime) = \emptyset$, then we define $\FirstLast(\pi,
T^\prime) = E(\pi)$. Moreover, we denote by $C(x)$ the edges connecting vertex
$x$ to its children in \tree{G}{r}. We are able to prove the following theorem,
whose proof is given in the appendix:

\begin{theorem}
\label{thm:2_path_preproc}
Let $F$ be a path failure of $|F| \le 2$ edges and $t \in V(G)$. 
Algorithm \ref{alg:2_path} computes, in $O(nm + n^2 \log n)$ time, a \pspt{3}
of size $O(n \log n)$.
\end{theorem}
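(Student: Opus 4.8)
The running-time bound is the easy part, so I would dispose of it first. The $3$-\texttt{EASPT} of \cite{NPW03} costs $O(m+n\log n)$, computing the initial SPT is $O(m+n\log n)$, and the path decomposition of Lemma~\ref{lemma:path_dec} applied recursively costs $O(n)$ per level for $O(\log n)$ levels, hence $O(n\log n)$ total. For each of the $O(n)$ non-leaf vertices $x$ ranged over by the main loop, the dominant work is computing a constant number of (post-failure) shortest-path trees from $x$ in $G$ minus one or two edges; each such computation is a single-source shortest-path call costing $O(m+n\log n)$. Summing over all $x$ gives $O(nm+n^2\log n)$, as claimed. One must also check that the remaining bookkeeping — the \FirstLast extractions, the loops over children of $x$, and the orientation of $\subtree{G-x}{r}$ — all fit within this bound, which is routine.

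For the size bound I would argue level by level in the recursion defining $\mathcal{P}$. Fix a path $P\in\mathcal{P}$ and a vertex $x\in V(P)$. The edges added while processing $x$ are: the \FirstLast edges for protecting $x$ (line~\ref{ln:protecting_x_start}--\ref{ln:protecting_x_end}), which contribute $O(1)$ edges per invocation by definition of \FirstLast; the edges of at most two post-failure shortest paths $\pi_{G-\hat e}(z)$ and $\pi_{G-\hat e-e'}(z)$ for protecting $z$ (lines~\ref{ln:protecting_z_start}--\ref{ln:protecting_z_end}) — here I need to bound the number of \emph{new} edges these contribute, and the key observation is that such a replacement path is $\pi_T(\cdot)$-like outside a short detour, so that over the whole path $P$ these paths together contribute only $O(|V(\subtree{G}{r}{r_P})|)$ new edges where $r_P$ is the root of the subtree handled at this recursion node (the shared SPT-prefixes telescope); one edge per non-$z$ child of $x$ (lines~\ref{ln:protecting_children_start}--\ref{ln:protecting_children_end}), which over all $x\in V(P)$ totals $O(n_P)$ where $n_P$ is the size of the current subtree, since the children ranged over across the path are disjoint; and the oriented forest edges of lines~\ref{ln:protecting_other_start}--\ref{ln:protecting_other_end}, again $O(n_P)$ in total over the path since these form forests on disjoint vertex sets. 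Thus each recursion node of size $n'$ adds $O(n')$ edges, every vertex of $T$ lies in $O(\log n)$ recursion nodes by the halving property of Lemma~\ref{lemma:path_dec}, and therefore $|E(H)|=O(n\log n)$.

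The main obstacle is the stretch bound, i.e. showing $H$ is indeed a $3$-\texttt{PASPT}. Given a path failure $F=\{e_1,e_2\}$ (or a single edge, handled by the $3$-\texttt{EASPT}) and a target $t$, let $P$ be the path of $\mathcal{P}$ containing the shallower failed edge, say $e_1=\hat e$ is the parent edge of $x$ and $e_2$ is the second failed edge. I would case-split on where $e_2$ and $t$ sit relative to $x$ and its child $z$ on $P$: (i) $e_2=(x,z)\in C(x)$ on $P$, so $T-F$ has a component rooted at $z$ — handled by the ``protect $z$'' edges, whose replacement path from $r$ to $z$ avoids $F$ and has length $\le d_{G-F}(z)+2d_T(\cdot)$-type slack giving the factor $3$, then extend inside $\subtree{G}{r}{z}$; (ii) $e_2\in C(x)$ but not on $P$, so it isolates some $\subtree{G}{r}{z_i}$ — handled by the single entry edge of lines~\ref{ln:protecting_children_start}--\ref{ln:protecting_children_end} plus the \FirstLast edges of line~\ref{ln:protecting_x_start} routing into $x$ first; (iii) $e_2$ lies strictly below $x$ (inside one of the subtrees), so removing $e_1$ already disconnects and the key case is $t$ in the component of $z$ cut further by $e_2$, handled by the second \FirstLast invocation of line~\ref{ln:protecting_x_end}; and (iv) $t$ is in a subtree not containing $x$, handled by the oriented-forest edges of line~\ref{ln:protecting_other_end}, which reconnect $\subtree{G-x}{r}$ to $r$ exactly. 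In each case the recipe is the same: exhibit a concrete $r$-to-$t$ path $Q$ in $H-F$ obtained by concatenating a (stored) near-optimal replacement subpath into the relevant component root with a tree path inside the component, and bound $w(Q)$ against $d_{G-F}(t)$ using that a detour around a single SPT edge costs at most twice the affected subpath — this is precisely the $(2\cdot 1+1)=3$ accounting familiar from \cite{NPW03}. Making the case analysis exhaustive (every combination of positions of $e_1,e_2,t$ that the recursion does not push to a deeper level) and checking that \FirstLast captures exactly the edges needed to splice these paths together is where the real work lies; the rest is arithmetic.
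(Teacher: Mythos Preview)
Your running-time argument is fine and matches the paper.

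For the size bound there is a concrete error: $\FirstLast(\pi,T')$ does \emph{not} contribute $O(1)$ edges per invocation. By definition it returns the edges of two entire subpaths of $\pi$, which can be long. The correct accounting (which the paper uses) partitions $V(T)\setminus\{x\}$ into $U_x$ (the component of $r$ in $T-x$), $D_x=V(\subtree{G}{r}{z})$, and $O_x$ (the other subtrees hanging from $x$). No edge entering $U_x$ is new (those are SPT edges), and each $\FirstLast$ call adds at most one edge entering $D_x$; every remaining new edge enters a vertex of $O_x$. The same holds for the paths protecting $z$ and for the oriented-forest edges. Since the sets $O_x$ are pairwise disjoint as $x$ ranges over a single decomposition path $P$, the total for $P$ is $O(|V(\subtree{G}{r}{r_P})|)$, and then the halving recursion gives $O(n\log n)$. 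Your ``$O(1)$ per $\FirstLast$'' claim bypasses exactly this $O_x$-disjointness argument, which is the heart of the size bound.

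For the stretch bound your case split is muddled. Since $F$ is a path failure of two \emph{adjacent} edges and $e_1=(y,x)$ is the parent edge of $x$, the second edge $e_2$ is necessarily in $C(x)$; your case~(iii) (``$e_2$ lies strictly below $x$'') is therefore vacuous. More seriously, (i)--(iii) are cases on the position of $e_2$ while (iv) is a case on the position of $t$; this is not a partition, and in particular within your case~(i) you only address $t\in V(\subtree{G}{r}{z})$. The paper's split, which makes the analysis clean, is entirely on $t$: (a)~$t\in V(\subtree{G}{r}{z})$, handled via the stored exact path to $z$; (b)~$t=x$, handled via the $\FirstLast$ edges; (c)~$t\in O_x$ with $x\notin V(\pi_{G-F}(t))$, handled via the oriented $\subtree{G-x}{r}$ edges; (d)~$t\in O_x$ with $x\in V(\pi_{G-F}(t))$, handled by first reaching $x$ and then, if $e_2$ cuts off $t$'s subtree $\subtree{G}{r}{z_i}$, using the single stored entry edge into that subtree. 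The algorithm lines you associate with each phenomenon are the right ones, but the organizing principle must be the location of $t$ and whether $\pi_{G-F}(t)$ passes through $x$, not the location of $e_2$.
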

Notice that it is possible to modify Algorithm~\ref{alg:2_path}
in order to build an oracle of size $O(n \log n)$ which is able to report,
with optimal query time, both a $3$-stretched shortest path in $G-F$ and its
distance, when $F$ contains two consecutive edges in $T$. Both the description
of the modified algorithm and the proof of the following theorem is given
in the appendix.
\begin{theorem}
\label{thm:oracle_2_path}
Let $F$ be a path failure of $|F| \le 2$ edges and $t \in V(G)$. A modification
of Algorithm~\ref{alg:2_path} builds, in $O(nm + n^2 \log n)$  time, an oracle
of size $O(n \log n)$ which is able to return:
\begin{itemize}
\item a $3$-approximate distance in $G-F$ between $r$ and $t$ in constant time;
\item the associated path in a time proportional to the number its edges.
\end{itemize}
\end{theorem}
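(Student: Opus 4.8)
The plan is to show how to modify Algorithm~\ref{alg:2_path} so that, during preprocessing, for each path $P \in \mathcal{P}$ and each non-leaf internal vertex $x \in V(P)$ (with $x \neq r$), we additionally store, in a small table indexed by the relevant key, enough bookkeeping information to reconstruct in optimal time the $3$-stretched replacement path for any target $t$ after a failure $F = \{\hat e, e'\}$ with $\hat e$ the edge from $x$ to its parent and $e'$ the following edge on the SPT path (or $F = \{\hat e\}$, a single failure, handled identically). Concretely, first I would preprocess $T = \tree{G}{r}$ for LCA queries (as in \cite{HT84}) and precompute and store the level $\ell(v)$ of every vertex, exactly as done in Algorithm~\ref{alg:f_path_oracle}; this lets a query identify, in constant time, both the vertex $x$ on whose ``gadget'' the failure falls and the subtree of $T - F$ containing the target $t$ (by comparing $\ell(\text{LCA}(x,t))$ with $\ell(x)$ and $\ell(x)-1$).

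Next I would go through the five ``protect'' blocks of Algorithm~\ref{alg:2_path} (protecting $x$, protecting $z$, protecting the other children of $x$, and protecting the vertices outside $\subtree{G}{r}{z}$) and observe that each of them is already an explicit description of a replacement route: the edges added in lines~\ref{ln:protecting_x_start}--\ref{ln:protecting_other_end} are precisely the non-tree ``shortcuts'' that, concatenated with segments of $T$ (or of the $3$-EASPT $\hat T$), give a $3$-approximate shortest path from $r$ to each target in $G - F$. So during preprocessing I would, for each such block, store the identity of the one (or $O(1)$) shortcut edge(s) it contributes, keyed by the pair $(x, \text{region of }t)$ — e.g. ``$t \in \subtree{G}{r}{z}$'', ``$t$ is in the subtree of child $z_i \neq z$'', ``$t$ lies on a path of $\mathcal{P}$ not through $x$'' — together with the anchor vertices at which that shortcut is spliced into $T$ or $\hat T$. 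Since $\mathcal{P}$ has $O(\log n)$ nesting depth and each vertex is internal on at most one path of $\mathcal{P}$, the total stored information is $O(n\log n)$, matching the claimed size, and the preprocessing cost is dominated by the $O(nm + n^2\log n)$ already incurred by Algorithm~\ref{alg:2_path}.

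For the query, given $(F, t)$ I would: (i) use LCA and the stored levels to find $x$ and classify $t$ into one of the constantly many regions above; (ii) look up the $O(1)$ shortcut edge(s) and anchor vertices for the pair $(x, \text{region})$; (iii) return the distance as the sum of the $O(1)$ stored shortcut weights plus the (precomputed, hence $O(1)$-retrievable) $T$- or $\hat T$-distances between consecutive anchors and the source/target — this is constant time; and (iv) for a path query, emit the concatenation of the stored shortcut edges with the explicit $T$- and $\hat T$-subpaths between anchors, which can be listed edge by edge in time proportional to the output length (using parent pointers in $T$ and in $\hat T$). Correctness of the returned value as a $3$-approximation, and the fact that the emitted edges all lie in $H - F$, follow verbatim from the (appendix) proof of Theorem~\ref{thm:2_path_preproc}, since the query simply re-traces the very path that proof exhibits.

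The main obstacle I anticipate is a careful case analysis ensuring that the region classification in step~(i) is both exhaustive and computable in constant time: the subtle cases are when $t \in \subtree{G}{r}{z}$ and the detour $\pi_{G-\hat e}(x)$ (or $\pi_{G-\hat e - e'}(x)$) re-enters that subtree — here one must additionally remember \emph{where} the detour first re-enters $\subtree{G}{r}{z}$ and splice accordingly — and when the second failed edge $e' \in C(x)$ falls on a child other than $z$, which selects a different stored shortcut. Handling the single-failure case $|F| = 1$ uniformly (it is the sub-case ``$e'$ absent'', covered by the $3$-EASPT $\hat T$ together with line~\ref{ln:protecting_x_start}) is routine once the two-failure case is in place. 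Apart from this bookkeeping, everything else is a direct translation of the structural lemma behind Theorem~\ref{thm:2_path_preproc} into a lookup-and-concatenate procedure, so the argument is mostly organizational rather than requiring new ideas.
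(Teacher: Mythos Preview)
Your proposal is essentially the same approach the paper takes: preprocess $T$ for LCA, then for each vertex $x$ considered by Algorithm~\ref{alg:2_path} store, in compact form, the handful of paths/edges added in each ``protect'' block together with anchor vertices so that a query can classify $t$ and assemble the path $\pi^*(t)$ exhibited in Lemmata~\ref{lemma:2_path_t_in_Tz}--\ref{lemma:2_path_t_not_in_Tz_x}. Two small points where the paper is more explicit than you are: (i) the ``region'' for $t\in O_x$ is not one of constantly many cases but depends on which child $z_i$ of $x$ roots the subtree containing $t$, so the paper stores, for every $u\in O_x$, the root of its subtree in $T-x$ (this is still $O(|O_x|)$ per $x$ and hence $O(n\log n)$ overall); and (ii) rather than deciding in advance whether $\pi_{G-F}(t)$ passes through $x$, the paper simply computes both candidate answers, discards any that uses the failed edge $(x,z_i)$ (via stored pointers to $x$ inside the compact path representations), and returns the shorter --- this sidesteps the case analysis you flag as the main obstacle.
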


\section{Experimental Study}\label{sec:experiments}
In this section, we present an experimental study to assess the performance,
w.r.t. both the quality of the stretch and the size (in terms of edges), of
the proposed structures within SageMath (v. 6.6) under GNU/Linux.

As input to our algorithms, we used weighted undirected graphs belonging to the
following graph categories: (i) \emph{Uncorrelated Random Graphs} (ERD):
generated by the general \emph{\erdos} algorithm~\cite{B01}; (ii)
\emph{Power-law Random Graphs} (BAR): generated by the \emph{\barabasi}
algorithm~\cite{AB99}; \emph{Quandrangular Grid Graphs} (GRI): graphs whose
topology is induced by a two-dimensional grid formed by squares.
For each of the above synthetic graph categories we generated three input
graphs of different size and density. We assigned weights to the edges at
random, with uniform probability, within $[100,100\,000]$.
We also considered two real-world graphs. In
details: (i) a graph (CAI) obtained by parsing the \emph{CAIDA
IPv4 topology dataset}~\cite{caida}, which describes a subset of the Internet
topology at router level (weights are given by round trip times);
(ii) the road graph of Rome (ROM) taken from the 9th Dimacs Challenge
Dataset\footnote{http://www.dis.uniroma1.it/challenge9} (weights are given by
travel times).

\renewcommand{\arraystretch}{1.1}

\newcolumntype{L}[1]{>{\raggedright\let\newline\\\arraybackslash\hspace{0pt}}m{
#1}}
\newcolumntype{C}[1]{>{\centering\let\newline\\\arraybackslash\hspace{0pt}}m{#1}
}
\newcolumntype{R}[1]{>{\raggedleft\let\newline\\\arraybackslash\hspace{0pt}}m{#1
}}

\newcolumntype{d}[1]{D{.}{\cdot}{#1} }

\newcommand{\ccol}[1]{\multicolumn{1}{c|}{#1}}
\newcommand{\ccolnop}[1]{\multicolumn{1}{c}{#1}}
\newcommand{\szccolp}[2]{\multicolumn{1}{C{#1cm}|}{#2}}
\newcommand{\szccolnop}[2]{\multicolumn{1}{C{#1cm}}{#2}}

Then, for each input graph, we built both the \pspt{(2k -1)(2|F|+1)}, for which 
we focused on the basic case of $k=1$, and the \pspt{3}, as follows: we 
randomly chose
a root vertex, computed the SPT and enriched it by using the
corresponding procedures (i.e. Algorithm~\ref{alg:f_path} and~\ref{alg:2_path},
resp.). We measured the total number of edges of the resulting structures.

Regarding Algorithm~\ref{alg:f_path}, we set $f=10$, as such a
value has already been considered in previous works focused on the effect
of path-like disruptions on shortest paths~\cite{BW09,DDFLP14}.
Then, we randomly select path failures of $|F|$ edges to perform on the input
graphs, with $|F|$ uniformly chosen at random within the range $[2,f]$. We
removed the edges belonging to the path failure from both the original graph
and the computed structure.
Regarding Algorithm~\ref{alg:2_path}, we simply chose at random a pair of edges
and removed them from both the original graph and the computed structure.

After the removal, we computed distances, from the root vertex, in both the
original graph and the fault tolerant structure, and measured the resulting
average stretch. In order to be fair, we considered only those nodes that get
disconnected as a consequence of the failures. Our results are summarized in
Table~\ref{table:results}, where, for each input graph, we report the number of
vertices and edges, the average size
(number of edges) of the two fault tolerant structures and the corresponding
provided average stretch.

\begin{table}[t]
\centering
\begin{tabular}{c|c|c|c|c|c|c}
\multirow{2}{*}{\bf G} &
\multirow{2}{*}{\bf |V(G)|} &
\multirow{2}{*}{\bf |E(G)|} &
\multicolumn{2}{c|}{\pspt{(2|F|+1)}}&
\multicolumn{2}{c}{\pspt{3}}\\
\cline{4-7}
& & & \bf \#edges & \bf avg stretch& \bf \#edges & \bf avg stretch\\
\hline
ERD-1 & 500 & 50\,000 & 3\,980 &1.8015 & 957 & 1.0000\\
\hline
ERD-2 & 1\,000 &50\,000 & 8\,899 &1.1360 & 1\,924 & 1.0000\\
\hline
ERD-3 & 5\,000 & 50\,000 & 20\,198 &1.0903 &  9\,501 & 1.0035 \\
\hline
BAR-1 & 500 & 1\,491 & 1\,366 &1.0003 & 949 & 1.0041\\
\hline
BAR-2 & 1\,000 & 2\,991 & 2\,765&1.0034 & 1\,871 & 1.0005 \\
\hline
BAR-3 & 5\,000 &14\,991 & 13\,349&1.0040 & 9\,459 & 1.0000\\
\hline
GRI-1 & 500 & 1\,012 & 1\,008 & 1.0005 & 868& 1.0000\\
\hline
GRI-2 & 1\,000 &1\,984 & 1\,973 & 1.0000 &1\,749 & 1.0000\\
\hline
GRI-3 & 5\,000 &9\,940 & 9\,884& 1.0000 &8\,826 & 1.0000\\
\hline
CAI &5\,000 & 6\,328 & 6\,033 & 1.0000 & 6\,026 & 1.0000\\
\hline
ROM & 3\,353 & 4\,831 & 4\,796 & 1.0000 & 4\,780 & 1.0000 \\
\end{tabular}
\caption{Average number of edges and stretch factor for both the
\pspt{(2|F|+1)} and the \pspt{3}.
}
\label{table:results}
\end{table}

First of all, our results show that the quality of the stretch, provided by
both the \pspt{(2|F|+1)} and the \pspt{3} in practice, is always by far better
than the estimation given by the worst-case bound (i.e. 2|F|+1 and 3,
resp.). In details, the average stretch is always very close to $1$ and
does not depend neither on the input size nor on the number of failures. This
is probably due to the fact that those cases considered in the worst-case
analysis are quite rare.

Similar considerations can be done w.r.t. the number of edges that are added to
the SPT by Algorithms~\ref{alg:f_path} and~\ref{alg:2_path}. In fact, also in
this case, the structures behave better than what the worst-case bound
suggests. For instance, the number of edges of the \pspt{(2|F|+1)} (the
\pspt{3}, resp.) is much smaller than $n f^2$ ($n \log n$, resp.).
In summary, our experiments suggest that the proposed fault tolerant structures
might be suitable to be used in practice.

\bibliographystyle{plain}
\bibliography{biblio}
 \appendix
 \section{Omitted Proofs}
In this section, we upper-bound the running time of 
Algorithm~\ref{alg:2_path}. In details, we prove that, given a set of two 
failures $F = \{e_1,e_2\}$, $d_{H-F}(t) \le 3\cdot d_{G-F}(t)$ for every $t\in 
V(G)$, and that $H$ contains $O(n\cdot \log n)$ edges.\footnote{We only focus 
on 
exactly two edge faults since $H$ already contains a 3-\texttt{EASPT}.} 
W.l.o.g. 
we assume that that $e_1=(y,x)$, $e_2=(x, k)$, where $x$ is a child of $y$ and 
$k$ is a child of $x$ in $T$.

Notice that, every possible edge $e_1$ of a pair of failures that can occur on 
$\tree{G}{r}$ is considered exactly once as, during the construction phase, we 
make use of the path decomposition technique of~\cite{BK13}.  Let $P \in 
\mathcal{P}$ be the path of the path decomposition $\mathcal{P}$ which contains 
$e_1$ and let $z$ be the vertex following $x$ in $P$.\footnote{Note that vertex 
$z$ always exists as the last vertex of $P$ must be a leaf in $T$, while $x$ is 
an internal vertex.} Notice that the other failed edge $e_2=(x,k)$ might or 
might not belong to the very same path $P$.

We now bound the distance $d_{H-F}(t)$ between $r$ and a generic \emph{target 
vertex} $t \in V(G)$. We assume, w.l.o.g., that $t$ belongs to 
$\subtree{G}{r}{x}$ as otherwise we trivially have $d_{H-F}(t) = d_{G-F}(t)$. 
For the sake of clarity, we divide the proof into parts, depending on the 
position of $t$ in $\tree{G}{r}-F$ and on the structure of the path 
$\pi_{G-F}(t)$.

\begin{lemma}
\label{lemma:2_path_t_in_Tz}
For every $t \in V(\subtree{G}{r}{z})$, there exists a path $\pi^*(t)$ between 
$r$ and $t$ in $H-F$ such that $w(\pi^*(t)) \le 3\cdot d_{G-F}(t)$.
\end{lemma}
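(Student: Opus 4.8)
I want to exhibit, for each target $t \in V(\subtree{G}{r}{z})$, a path $\pi^*(t)$ from $r$ to $t$ that lives in $H-F$ and whose length is at most $3 \cdot d_{G-F}(t)$. The natural candidate uses the edges that Algorithm~\ref{alg:2_path} added when protecting vertex $z$: lines~\ref{ln:protecting_z_start}--\ref{ln:protecting_z_end} insert the edges of $\pi_{G-\hat{e}}(z)$, and also the edges of $\pi_{G-\hat{e}-e'}(z)$ for every edge $e'$ of $C(x)$ lying on $\pi_{G-\hat{e}}(z)$. Here $\hat{e} = e_1 = (y,x)$, so these are exactly the shortest paths to $z$ in $G-e_1$ (and in $G-e_1-e_2$ when $e_2 = (x,k)$ happens to lie on $\pi_{G-e_1}(z)$). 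Since $t \in V(\subtree{G}{r}{z})$, the plan is to reach $z$ along one of these stored paths and then descend from $z$ to $t$ inside $\subtree{G}{r}{z} \subseteq T \subseteq H$, which is untouched by $F$ because $F \subseteq \{(y,x),(x,k)\}$ and both of those edges are strictly above $z$.

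**Key steps in order.** First I would fix notation: $F \subseteq \{e_1,e_2\}$ with $e_1 = (y,x)$, $e_2=(x,k)$, and I note $d_{G-F}(t) = d_{G-F}(z) + d_T(z,t)$ since every $r$–$t$ path in $G-F$ must cross into $\subtree{G}{r}{z}$ and, $z$ being the apex of that subtree with no failed edge below it, the cheapest way in is via $z$; formally $d_{G-F}(z) \le d_{G-F}(t)$ and the tail $\pi_T(z,t)$ is a shortest $z$–$t$ path. Second, I would produce a path to $z$ in $H-F$ of length at most $3\,d_{G-F}(z)$: if $F = \{e_1\}$ or $F=\{e_1,e_2\}$ with $e_2 \notin \pi_{G-e_1}(z)$, then $\pi_{G-e_1}(z)$ is in $H-F$ and has length $d_{G-e_1}(z) \le d_{G-F}(z)$; if $e_2 \in \pi_{G-e_1}(z)$, then line~\ref{ln:protecting_z_end} added $\pi_{G-e_1-e_2}(z)$, which is in $H-F$ and is an exact shortest path to $z$ in $G-F$, so its length is $d_{G-F}(z)$. (The case $F = \{e_2\}$ alone or any $F$ not containing $e_1$ is subsumed since $\hat e = e_1$ and removing fewer edges only helps; I should double-check that when $|F|=1$ with $F=\{e_2\}$ the $3$-\texttt{EASPT} already handles it, or that $\pi_{G-\hat e}(z)$ still works.) Third, concatenate: $\pi^*(t) = (\text{stored path to } z) \circ \pi_T(z,t)$, giving $w(\pi^*(t)) \le d_{G-F}(z) + d_T(z,t) = d_{G-F}(t) \le 3\,d_{G-F}(t)$ — in fact stretch $1$ in this regime, which is consistent with the experimental observation that this case is easy.

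**Main obstacle.** The only real subtlety is the bookkeeping over which subset $F$ of $\{e_1,e_2\}$ actually fails and whether $e_2$ lies on $\pi_{G-e_1}(z)$. The construction in lines~\ref{ln:protecting_z_start}--\ref{ln:protecting_z_end} is designed precisely so that in every sub-case some stored path to $z$ survives $F$; I would need to verify the $\FirstLast$-free portion of the argument carefully, in particular that when $e_2 \notin \pi_{G-e_1}(z)$ the path $\pi_{G-e_1}(z)$ genuinely avoids $e_2$ (true by assumption) and avoids $e_1$ (true by construction), hence lies in $H-F$. Everything else — the decomposition $d_{G-F}(t) = d_{G-F}(z) + d_T(z,t)$ and the fact that $\subtree{G}{r}{z}$ is intact — is immediate from $z$ being a proper descendant of both endpoints of the failed edges. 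So the proof is short once the sub-cases on $F$ are laid out cleanly.
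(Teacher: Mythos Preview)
Your path construction is the same as the paper's: reach $z$ along one of the stored paths from lines~\ref{ln:protecting_z_start}--\ref{ln:protecting_z_end}, then descend to $t$ inside $\subtree{G}{r}{z}$. Your case analysis showing $d_{H-F}(z)=d_{G-F}(z)$ is also correct and matches the paper.

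The gap is in your final bounding step. The equality $d_{G-F}(t) = d_{G-F}(z) + d_T(z,t)$ is \emph{false} in general, and so is the weaker claim $d_{G-F}(z)\le d_{G-F}(t)$. The subtree $\subtree{G}{r}{z}$ is only a subtree of $T$; in $G$ there may be non-tree edges entering it at vertices other than $z$. Concretely, take the tree path $r\!-\!y\!-\!x\!-\!z\!-\!t$ with all tree edges of weight $1$, let $F=\{(y,x),(x,z)\}$, and add a single non-tree edge $(r,t)$ of weight $5$. Then $d_{G-F}(t)=5$ but $d_{G-F}(z)=6$ (the only route to $z$ is $r\to t\to z$), so your claimed stretch-$1$ bound $d_{G-F}(z)+d_T(z,t)=7\le d_{G-F}(t)=5$ fails.

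The paper recovers the factor $3$ by a triangle-inequality detour rather than by asserting that $\pi_{G-F}(t)$ passes through $z$: from $d_{G-F}(z)\le d_{G-F}(t)+d_{G-F}(t,z)$ and $d_{G-F}(t,z)=d_G(t,z)$ (the tree path $z\!-\!t$ avoids $F$), one gets
\[
w(\pi^*(t)) \le d_{G-F}(z)+d_G(z,t) \le d_{G-F}(t)+2\,d_G(z,t) \le d_{G-F}(t)+2\,d_G(t) \le 3\,d_{G-F}(t),
\]
using $z\in V(\pi_G(t))$ for the penultimate step. Plug this in place of your incorrect decomposition and the proof is complete.
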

\begin{proof}
The edges added to $H$ at
Lines~\ref{ln:protecting_z_start}--\ref{ln:protecting_z_end} of
Algorithm~\ref{alg:2_path} guarantee that $d_{H-F}(z)$ equals $d_{G-F}(z)$ for
every possible pair of failures. It follows that we can choose
$\pi^*(t) = \pi_{G-F}(z) \circ \pi_{G}(z,t)$, as we have:
\begin{align*}
w(\pi^*(t)) & = d_{H-F}(z) + d_{H-F}(z,t)\\
	   & \le d_{H-F}(z) + d_{G}(z,t)\lcomment{$\pi_G(z,t) =
\pi_{H-F}(z,t)$}\\
	   & \le d_{G-F}(z) + d_{G}(z,t)\lcomment{By
Lines~\ref{ln:protecting_z_start}--\ref{ln:protecting_z_end} of
Alg.~\ref{alg:2_path}}\\
	   & \le d_{G-F}(t) + d_{G-F}(t,z) + d_G(z,t)\lcomment{By triang.
ineq.}\\
	   & \le d_{G-F}(t) + 2d_{G}(t,z)\lcomment{$\pi_G(z,t) =
\pi_{G-F}(z,t)$}\\
	   & \le d_{G-F}(t) + 2d_{G}(t) \le 3d_{G-F}(t). \lcomment{$z \in 
V(\pi_G)(t)$}
\end{align*}
\end{proof}

\begin{lemma}
\label{lemma:2_path_x}
There exists a path $\pi^*(x)$ between $r$ and $x$ in $H-F$ such that
$w(\pi^*(x)) \le 3\cdot d_{G-F}(x)$ if $e_2=(x,z)$ and $w(\pi^*(x)) =
d_{G-F}(x)$ otherwise.
\end{lemma}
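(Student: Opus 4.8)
The plan is to mirror the structure of the proof of Lemma~\ref{lemma:2_path_t_in_Tz}, but now the target is the vertex $x$ itself (the common endpoint of the two failed edges $e_1=(y,x)$ and $e_2$), and we must deal separately with the two cases that appear in the statement. Recall that $x$ is an internal vertex of some decomposition path $P\in\mathcal{P}$, with child $z$ along $P$, and that $\hat e = e_1 = (y,x)$ is exactly the edge connecting $x$ to its parent in $T$; so when Algorithm~\ref{alg:2_path} processes $x$ inside $P$, it has access to (sparsified portions of) $\pi_{G-\hat e}(x)$ and $\pi_{G-\hat e - e'}(x)$ for $e'\in C(x)$.

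First I would dispose of the easy case: $e_2 \neq (x,z)$, i.e. the second failure is not the edge towards $z$. Then $e_2\in C(x)\setminus\{(x,z)\}$ or $e_2$ is not incident to $x$ at all; in particular the full shortest path $\pi_{G-\hat e}(x)$ (respectively $\pi_{G-\hat e - e_2}(x)$ if $e_2\in C(x)$) is a shortest path from $r$ to $x$ in $G-F$ that avoids both failed edges. The claim is that the relevant \FirstLast{} edges added at Lines~\ref{ln:protecting_x_start}--\ref{ln:protecting_x_end}, together with the $3$-\texttt{EASPT} $\hat T$ already in $H$, reconstruct a path of exactly this length. Concretely, \FirstLast{} stores the prefix of $\pi_{G-\hat e}(x)$ up to its first vertex in $\subtree{G}{r}{z}$ and the suffix from its last such vertex; the middle portion, which lies entirely inside the component $\subtree{G}{r}{z}$ (a component of $G-F$ untouched by the failures), is available in $T\subseteq H$, and if $\pi_{G-\hat e}(x)$ never enters $\subtree{G}{r}{z}$ then \FirstLast{} stores all of it. Either way we recover a path of length $d_{G-F}(x)$, giving $w(\pi^*(x))=d_{G-F}(x)$. (One subtlety: if $e_2\in C(x)\setminus\{(x,z)\}$ we need the conditional branch at Line~\ref{ln:protecting_x_end} to have fired, which it does precisely because $\pi_{G-\hat e}(x)$ may use $e_2$; if it does not use $e_2$ the unconditional branch already suffices.)

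For the harder case $e_2=(x,z)$, a shortest path from $r$ to $x$ in $G-F$ need no longer avoid $\subtree{G}{r}{z}$ in a controlled way, so we fall back to a triangle-inequality argument exactly as in Lemma~\ref{lemma:2_path_t_in_Tz}. Let $\pi=\pi_{G-F}(x)$ and let $w$ be the last vertex of $\pi$ lying outside $\subtree{G}{r}{x}$; actually the cleanest route is to reuse Lemma~\ref{lemma:2_path_t_in_Tz} directly: run the construction of $\pi^*(\cdot)$ for the target $t$ taken to be $x$ \emph{viewed inside a sibling subtree}, or more simply, set $\pi^*(x) = \pi_{H-F}(u) \circ \pi_T(u,x)$ where $u$ is the deepest vertex of $\pi_{G-F}(x)$ that lies in the component of $T-F$ not contained in $\subtree{G}{r}{x}$, so that $u$ is reached by $H-F$ with $d_{H-F}(u)=d_{G-F}(u)$ (using the $3$-\texttt{EASPT} / the "other children" and "paths not containing $x$" edges of Lines~\ref{ln:protecting_children_start}--\ref{ln:protecting_other_end}), and then bound
\begin{align*}
w(\pi^*(x)) &\le d_{G-F}(u) + d_G(u,x) \le d_{G-F}(x) + d_{G-F}(x,u) + d_G(u,x)\\
            &\le d_{G-F}(x) + 2 d_G(u,x) \le d_{G-F}(x) + 2 d_G(x) \le 3 d_{G-F}(x),
\end{align*}
where the last steps use that $u\in V(\pi_G(x))$ (since $u$ lies on $\pi_{G-F}(x)$ which, before entering $\subtree{G}{r}{x}$, coincides with a shortest path of $G$) and $d_G(x)\le d_{G-F}(x)$.

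The main obstacle I anticipate is the bookkeeping in the harder case: one must argue carefully that the vertex $u$ — the first point at which the replacement path "re-enters" the damaged region — is genuinely reachable in $H-F$ at its true distance. This requires checking that whichever of the several edge-sets added by Algorithm~\ref{alg:2_path} is responsible (the $3$-\texttt{EASPT} for $e_1$ alone, or the "protect other children of $x$" edges, or the "protect vertices whose paths do not contain $x$" edges via $T'=\subtree{G-x}{r}$) indeed covers the case at hand, i.e. a case analysis on where $u$ sits relative to $x$ and $z$. This is exactly the kind of exhaustive-but-routine argument that the subsequent lemmas in the appendix are presumably organized to handle, so I would phrase Lemma~\ref{lemma:2_path_x} as one branch of that case analysis and defer the "paths not containing $x$" and "other children" sub-cases to their own lemmas, keeping the present proof to the two-line \FirstLast{} argument plus the displayed triangle-inequality chain above.
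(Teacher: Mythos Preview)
Your easy case ($e_2 \neq (x,z)$) is essentially the paper's argument: the $\FirstLast$ edges stored at Lines~\ref{ln:protecting_x_start}--\ref{ln:protecting_x_end} recover (the non-$\subtree{G}{r}{z}$ portions of) a shortest $r$--$x$ path in $G-F$, and the branch at Line~\ref{ln:protecting_x_end} handles the sub-case where $\pi_{G-\hat e}(x)$ happens to use $e_2\in C(x)$.

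Your hard case ($e_2=(x,z)$), however, has a genuine gap. You pick $u$ as the last vertex of $\pi_{G-F}(x)$ lying in the component of $T-F$ that contains $r$, and set $\pi^*(x)=\pi_{H-F}(u)\circ\pi_T(u,x)$. But since $u$ lies outside $\subtree{G}{r}{x}$, the tree path $\pi_T(u,x)$ necessarily traverses the edge $e_1=(y,x)$ to enter $\subtree{G}{r}{x}$, so $\pi_T(u,x)$ is \emph{not} in $H-F$ at all. Moreover, your displayed chain uses $d_G(u,x)\le d_G(x)$, justified by ``$u\in V(\pi_G(x))$''; this is false in general, because $\pi_G(x)$ is the tree path through $y$, whereas $u$ is only known to lie on the \emph{replacement} path $\pi_{G-F}(x)$. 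The obstacle you anticipate (reaching $u$ at its true distance) is actually a non-issue --- $u$ is in the same tree component as $r$, so $d_{H-F}(u)=d_T(u)$ trivially --- the real problem is getting from $u$ to $x$.

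The paper's fix is to pivot inside $\subtree{G}{r}{z}$ rather than in $U_x$: let $q'$ be the \emph{last} vertex of $\pi_{G-F}(x)$ belonging to $\subtree{G}{r}{z}$. Then Lemma~\ref{lemma:2_path_t_in_Tz} applies to $q'$ and gives a path $\pi^*(q')$ in $H-F$ with $w(\pi^*(q'))\le 3\,d_{G-F}(q')$, while the suffix $\pi_{G-F}(x)[q',x]$ is precisely the second piece stored by $\FirstLast(\pi_{G-F}(x),\subtree{G}{r}{z})$ and hence lies in $H-F$. Concatenating yields $w(\pi^*(x))\le 3\,d_{G-F}(q')+d_{G-F}(q',x)\le 3\,d_{G-F}(x)$. (If $\pi_{G-F}(x)$ never enters $\subtree{G}{r}{z}$, $\FirstLast$ stores the whole path and one gets $d_{G-F}(x)$ outright.) The key idea you are missing is that the $\FirstLast$ \emph{suffix}, not a tree path, is what bridges back to $x$ without touching $e_1$.
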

\begin{proof}
	If $V(\pi_{G-F}(x)) \cap V(\subtree{G}{r}{z}) = \emptyset$ we set
$\pi^*(x) = \pi_{G-F}(x)$ and we are done as $\pi^*(x)$ gets added to $H$ by
Lines~\ref{ln:protecting_x_start}--\ref{ln:protecting_x_end} of
Algorithm~\ref{alg:2_path}.

	Otherwise, if $V(\pi_{G-e_1-e_2}(x)) \cap V(\subtree{G}{r}{z}) \neq
\emptyset$,  let $q,q^\prime$ be the first and last vertex of
$\pi=\pi_{G-e_1-e_2}(x)$ that is in $V(\subtree{G}{r}{z})$, respectively. If 
$e_2
\neq (x,y)$ then it suffices to choose $\pi^*(x)=\pi$. Indeed, by construction,
$\pi$ is in $H$ since both $\pi[r,q]$ and $\pi[q,x]=\pi_G(q,z)$ are in $H$.
		
	Finally, if $e_2 = (x,y)$, then $pi^*(x) = \pi^*(q^\prime) \circ
\pi[q^\prime, x]$, where $\pi^*(q^\prime)$ is the path of
Lemma~\ref{lemma:2_path_t_in_Tz}. The path $\pi^*(x)$ is in $H$ and we can
bound its length as follows:
	\[
		w(\pi^*(x)) = w(\pi^*(q^\prime)) + d_{G-F}(q^\prime, x) \le 3
d_{G-F}(q^\prime) + d_{G-F}(q^\prime, x)  \le  3 d_{G-F}(x).
	\]
\end{proof}

\begin{lemma}
\label{lemma:2_path_t_not_in_Tz_not_x}
For every $t \not\in V(\subtree{G}{r}{z}) \cup \{x\}$ such that $x \not \in
V(\pi_{G-F}(t))$, there exists a path $\pi^*(t)$ between $r$ and $t$ in $H-F$
satisfying $w(\pi^*(t)) \le 3\cdot d_{G-F}(t)$.
\end{lemma}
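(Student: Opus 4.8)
The plan is to route $t$ through the edges added at Lines~\ref{ln:protecting_other_start}--\ref{ln:protecting_other_end} of Algorithm~\ref{alg:2_path}, i.e.\ through (a suffix of) the root-to-$t$ path of the shortest-path tree $T'$ of $G-x$ built there, possibly prefixed by the detour of Lemma~\ref{lemma:2_path_t_in_Tz}. The first thing I would observe is that, since $x\notin V(\pi_{G-F}(t))$ and the two failed edges $e_1=(y,x)$ and $e_2=(x,k)$ are both incident to $x$, the path $\pi_{G-F}(t)$ is already a path of $G-x$; as $E(G-x)\subseteq E(G-F)$, this gives $d_{G-F}(t)=d_{G-x}(t)$. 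Hence the path from $r$ to $t$ in $T'$, call it $\pi_{T'}(r,t)$, has weight exactly $d_{G-F}(t)$ and contains no edge incident to $x$, so it avoids $F$.

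I would then distinguish two cases. If $V(\pi_{T'}(r,t))\cap V(\subtree{G}{r}{z})=\emptyset$, every non-root vertex of $\pi_{T'}(r,t)$ lies outside $V(\subtree{G}{r}{z})$, so each of its edges was added to $H$ at Line~\ref{ln:protecting_other_end}, and $\pi^*(t)=\pi_{T'}(r,t)$ already satisfies $w(\pi^*(t))=d_{G-F}(t)$. Otherwise, let $q$ be the \emph{last} vertex of $\pi_{T'}(r,t)$ that belongs to $V(\subtree{G}{r}{z})$ (so $q\neq r$ and $q\neq t$). By maximality of $q$, every edge of the suffix $\pi_{T'}(q,t)$ has its endpoint farther from $r$ in $T'$ outside $V(\subtree{G}{r}{z})$, hence this suffix lies in $H$ (Line~\ref{ln:protecting_other_end}) and avoids $F$; and since $q\in V(\subtree{G}{r}{z})$, Lemma~\ref{lemma:2_path_t_in_Tz} supplies a path $\pi^*(q)$ in $H-F$ from $r$ to $q$ with $w(\pi^*(q))\le 3 d_{G-F}(q)$. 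Setting $\pi^*(t)=\pi^*(q)\circ\pi_{T'}(q,t)$ yields a path from $r$ to $t$ in $H-F$.

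What remains is a short triangle-type estimate. Since $q$ lies on the shortest path from $r$ to $t$ in $G-x$ (namely $\pi_{T'}(r,t)$), we have $w(\pi_{T'}(q,t))=d_{G-x}(t)-d_{G-x}(q)$ and $d_{G-x}(q)\le d_{G-x}(t)$; combining with $d_{G-F}(q)\le d_{G-x}(q)$ and $d_{G-x}(t)=d_{G-F}(t)$,
\begin{align*}
 w(\pi^*(t)) &= w(\pi^*(q)) + w(\pi_{T'}(q,t)) \le 3 d_{G-F}(q) + d_{G-x}(t) - d_{G-x}(q)\\
 &\le 2 d_{G-x}(q) + d_{G-x}(t) \le 3 d_{G-x}(t) = 3 d_{G-F}(t).
\end{align*}
The point needing the most care is the claim that the suffix $\pi_{T'}(q,t)$ really lies in $H$: this hinges on the ``edges oriented towards the leaves'' convention at Line~\ref{ln:protecting_other_start} together with the maximal choice of $q$, which guarantee that each edge of this suffix has its deeper endpoint outside $V(\subtree{G}{r}{z})$ and was therefore selected at Line~\ref{ln:protecting_other_end}. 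One should also note that $\pi_{T'}(r,t)$ itself may dip into $\subtree{G}{r}{z}$, which is precisely why its prefix up to $q$ is replaced by $\pi^*(q)$ instead of being kept as is; the rest is routine.
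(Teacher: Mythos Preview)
Your proof is correct and follows essentially the same approach as the paper's: both arguments rest on the identity $d_{G-F}(t)=d_{G-x}(t)$, take the root-to-$t$ path in the SPT $T'$ of $G-x$, split into the two cases according to whether this path meets $\subtree{G}{r}{z}$, and in the second case replace the prefix up to the last vertex $q$ in $\subtree{G}{r}{z}$ by the detour of Lemma~\ref{lemma:2_path_t_in_Tz}. Your write-up is in fact more careful than the paper's in justifying why the suffix $\pi_{T'}(q,t)$ lies in $H$ (via the orientation convention at Line~\ref{ln:protecting_other_start} and the maximality of $q$) and in making the inequality chain explicit.
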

\begin{proof}
First of all notice that it must hold $\pi_{G-F}(t) = \pi_{G-x}(t)$. If
$\pi_{G-x}(t)$ does not contain any vertex of $\subtree{G}{r}{z}$ we are done, 
as
we can set $\pi^*(t) = \pi_{G-x}(t)$ (by
Lines~\ref{ln:protecting_other_start}--\ref{ln:protecting_other_end} of
Algorithm~\ref{alg:2_path}). Otherwise, let us call $q$ the last vertex of
$\pi_{G-x}(t)$ that belongs to $\subtree{G}{r}{z}$.
We set $\pi^*(t) = \pi^*(q) \circ \pi_{G-x}(q,t)$, where $\pi^*(q)$ is the path
of Lemma~\ref{lemma:2_path_t_in_Tz}. We have
\begin{align*}
w(\pi^*(t)) & = w(\pi^*(q)) + d_{H-x}(q,t)\\
	& \le 3d_{G-F}(q) + d_{H-F}(q,t) \lcomment{By
Lemma~\ref{lemma:2_path_t_in_Tz}} \\
	& \le 3d_{G-F}(q) + d_{H-F}(q,t) \lcomment{By
Lines~\ref{ln:protecting_other_start}--\ref{ln:protecting_other_end} of
Alg.~\ref{alg:2_path}} \\
	& \le 3d_{G-F}(q) + 3d_{G-F}(q,t) = 3d_{G-F}(t)\lcomment{Since $q \in
V(\pi_{G-F}(t))$}
\end{align*}
\end{proof}

\begin{lemma}
\label{lemma:2_path_t_not_in_Tz_x}
For every $t \not\in V(\subtree{G}{r}{z}) \cup \{x\}$ such that $x \in
V(\pi_{G-F}(t))$,
there exists a path $\pi^*(t)$ between $r$ and $t$ in $H-F$ satisfying
$w(\pi^*(t)) \le 3\cdot d_{G-F}(t)$.
\end{lemma}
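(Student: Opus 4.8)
The plan is to imitate the true replacement path $\pi_{G-F}(t)$ inside $H-F$, cut at the failed vertex $x$. Write the failed edges as $e_1=(y,x)$ (the edge above $x$ in $T$) and $e_2=(x,k)$ ($k$ a child of $x$), and let $z$ be the child of $x$ on the decomposition path currently processed. Since $x\in V(\pi_{G-F}(t))$ we split $\pi_{G-F}(t)=\pi_{G-F}(x)\circ\pi_{G-F}(x,t)$, so $d_{G-F}(t)=d_{G-F}(x)+d_{G-F}(x,t)$. The key elementary fact is that, as $t\in V(\subtree{G}{r}{x})$ (assumed w.l.o.g.), the tree path $\pi_T(x,t)$ has length $d_T(x,t)=d_G(t)-d_G(x)\le d_G(x,t)\le d_{G-F}(x,t)$, by the triangle inequality in $G$ and monotonicity of distances under edge deletion. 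Moreover, by Lemma~\ref{lemma:2_path_x} there is a path $\pi^*(x)$ from $r$ to $x$ in $H-F$ with $w(\pi^*(x))\le 3\,d_{G-F}(x)$, and with $w(\pi^*(x))=d_{G-F}(x)$ whenever $e_2\neq(x,z)$.

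\textbf{Case A: $t\notin V(\subtree{G}{r}{k})$.} Then $\pi_T(x,t)$ uses only tree edges strictly below $x$ that are distinct from $e_2$, hence it lies in $H$ and avoids $F$. I would set $\pi^*(t)=\pi^*(x)\circ\pi_T(x,t)$, a path of $H-F$, and bound
\[
w(\pi^*(t))=w(\pi^*(x))+d_T(x,t)\le 3\,d_{G-F}(x)+d_{G-F}(x,t)\le 3\,d_{G-F}(t).
\]
Note that this already covers the seemingly delicate sub-case $e_2=(x,z)$: there $t\notin V(\subtree{G}{r}{z})=V(\subtree{G}{r}{k})$ holds by hypothesis, so we are in Case A and the estimate survives even the worst-case guarantee $w(\pi^*(x))\le 3\,d_{G-F}(x)$.

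\textbf{Case B: $t\in V(\subtree{G}{r}{k})$.} Then $k\neq z$ (else $t\in V(\subtree{G}{r}{z})$), so $e_2\neq(x,z)$ and Lemma~\ref{lemma:2_path_x} yields the \emph{exact} bound $w(\pi^*(x))=d_{G-F}(x)$; also $\pi_T(x,t)$ is now unusable since it crosses $e_2$. Here I would reach $\subtree{G}{r}{k}$ through the edge $(u,q)$ added for the child $k$ at lines~\ref{ln:protecting_children_start}--\ref{ln:protecting_children_end} (the first edge with $q\in V(\subtree{G}{r}{k})$ on a shortest path $\pi_{G-e_1-e_2}(x,k)$), after getting to $x$ via $\pi^*(x)$ and to the outer endpoint $u$ through the tree edges of $\tree{G}{r}$ and the $x$-free structure $T'$ of lines~\ref{ln:protecting_other_start}--\ref{ln:protecting_other_end}, falling back to the path $\pi^*(\cdot)$ of Lemma~\ref{lemma:2_path_t_in_Tz} whenever the route meets $\subtree{G}{r}{z}$; the walk then finishes by tree edges inside $\subtree{G}{r}{k}$, and its length is $d_{G-F}(x)$ plus terms controlled either against $d_{G-F}$ of the intermediate vertices or by subtree tree-distance estimates of the Case-A type. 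An alternative is to reduce to a single-edge failure (say $e_2$ alone), handled by the embedded $3$-\texttt{EASPT}, and to invoke the dedicated protections around $x$ and $z$ only for the residual configurations in which that replacement path would cross $e_1$. Either way, pinning down this sub-case — correctly threading the route into the inner subtree and checking that the pieces compose to stretch exactly $3$ rather than more — is the main obstacle of the proof, and it is the one place where the extra protections do more work than the embedded $3$-\texttt{EASPT} alone.
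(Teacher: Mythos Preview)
Your Case~A is correct and coincides with the paper's first sub-case. The genuine gap is Case~B, which you explicitly leave open: you name the right ingredients (the exact bound $w(\pi^*(x))=d_{G-F}(x)$ from Lemma~\ref{lemma:2_path_x} and the stored edge $(u,q)$) but neither fix the route nor carry out the estimate, and your hedging through $T'$ and Lemma~\ref{lemma:2_path_t_in_Tz} is never shown to stay within stretch~$3$.

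The paper's argument in this sub-case is more direct. With $z_i=k$ the child of $x$ whose subtree contains $t$, it takes
\[
\pi^*(t)=\pi^*(x)\circ\pi_G(x,u)\circ(u,q)\circ\pi_G(q,k)\circ\pi_G(k,t).
\]
The step you over-complicate is the segment from $x$ to $u$: the paper argues that the portion of $\pi_{G-F}(x,k)$ before its first entry into $\subtree{G}{r}{k}$ lies in $V(\subtree{G}{r}{x})\setminus V(\subtree{G}{r}{k})$, so $\pi_G(x,u)$ is simply a tree path below $x$ that avoids both $e_1$ and $e_2$ and is already in $H$; no appeal to $T'$ or to Lemma~\ref{lemma:2_path_t_in_Tz} is made. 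The length bound is then a short chain. Since $u$, $(u,q)$ and $q$ lie on $\pi_{G-F}(x,k)$, one has $d_G(x,u)+w(u,q)+d_G(q,k)\le d_{G-F}(x,q)+d_{G-F}(q,k)=d_{G-F}(x,k)$; by the triangle inequality and $k\in V(\pi_G(x,t))$, $d_{G-F}(x,k)+d_G(k,t)\le d_{G-F}(x,t)+2d_G(k,t)\le d_{G-F}(x,t)+2d_G(x,t)\le 3\,d_{G-F}(x,t)$. Adding $w(\pi^*(x))=d_{G-F}(x)$ yields $w(\pi^*(t))\le d_{G-F}(x)+3\,d_{G-F}(x,t)\le 3\,d_{G-F}(t)$. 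Your alternative of invoking the embedded $3$-\texttt{EASPT} for the single failure $e_2$ does not close the case either: when that replacement path happens to use $e_1$, rerouting around $e_1$ costs an additional multiplicative factor that you have no mechanism to absorb into a global stretch of~$3$.
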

\begin{proof}
Notice that $t$ belongs to a subtree $\subtree{G}{r}{z_i}$ for exactly one child
$z_i \neq z$ of $x$ in
$\tree{G}{r}$. If $(x,z_i) \neq e_2$, we have that
$\pi_{G}(x,t)=\pi_{G-F}(x,t)=\pi_{H-F}(x,t)$
We set $\pi^*(t) = \pi^*(x) \circ \pi_{G}(x,t)$ where $\pi^*(x)$ is the path of
Lemma~\ref{lemma:2_path_x}. We have:
\[
	w(\pi^*(t)) = w(\pi^*(x)) + d_G(x,t) \le 3 d_{G-F}(x) + d_{G-F}(x,t)
\le 3 d_{G-F}(t)
\]

Otherwise, $e_2 = (x,z_i)$, which means that $t$ belongs to a subtree of
$\tree{G}{r}$ which gets disconnected form $x$ by the removal of $e_2$.

Since $e_2 \neq (x, z)$, we know that the path $\pi^*(x)$ of
Lemma~\ref{lemma:2_path_x} satisfies \linebreak $w(\pi^*(x)) = d_{G-F}(x)$. 
Moreover, the
shortest path $\pi_{G-F}(x, z_i)$ traverses at most one other subtree (other
than $\subtree{G}{r}{z_i}$) rooted at a child of $x$. This is because $H-F$
contains the shortest paths from $x$ to  every vertex in $V(\subtree{G}{r}{x})
\setminus V(\subtree{G}{r}{z_i})$. Let $(u,q)$ be the first edge of the path
$\pi_{G-F}(x,z_i)$ such that $q \in V(\subtree{G}{r}{z_i})$ and notice that this
edge belongs to $H$
(Lines~\ref{ln:protecting_children_start}--\ref{ln:protecting_children_end} of
Algorithm~\ref{alg:2_path}). By the choice of $(u,q)$ we have $\pi_{H-F}(x,q) =
\pi_{G}(x,u) \circ (u,q)$. We set $\pi^*(t) = \pi^*(x) \circ \pi_{G}(x,u) \circ
(u,q) \circ \pi_{G}(q,z_i) \circ \pi_{G}(z_i,t)$.
\begin{align*}
	w(\pi^*(t)) & =  w(\pi^*(x)) + d_{G}(x,u) + w(u,q) + d_{G}(q,z_i) +
d_{G}(z_i,t)\\
	& \le d_{G-F}(x) + d_{G-F}(x,q) + d_{G-F}(q,z_i) +  d_{G}(z_i,t) \\
	& \le d_{G-F}(x) + d_{G-F}(x, z_i) + d_{G}(z_i,t)\lcomment{Since $q \in
V(\pi_{G-F}(x,z_i))$}\\
	& \le d_{G-F}(x) + d_{G-F}(x, t) + 2d_{G}(z_i,t) \lcomment{By triang.
ineq.}\\
	& \le d_{G-F}(x) + d_{G-F}(x, t) + 2d_{G}(x,t) \lcomment{$z_i \in 
V(\pi_G)(x,t)$} \\
	& \le d_{G-F}(x) + d_{G-F}(x, t) + 2d_{G-F}(x,t)\\
	& = d_{G-F}(x) +3 d_{G-F}(x,t) \le 3 d_{G-F}(t).\lcomment{Since $x \in
V(\pi_{G-F}(t))$}
\end{align*}
\end{proof}

We now bound the size of $H$. In order to do so, it is useful to split the 
vertices
of the $T$ into components, depending on the vertex $x$ that is currently 
considered by
Algorithm~\ref{alg:2_path}. More formally, when a couple of edges $(y,x), (x,z)$ 
is
considered we can partition the vertices of $T-x$ into three distinct sets (see 
Figure~\ref{fig:2_failure_structure}):
\begin{itemize}
	\item $U_x$, which contains the vertices which are in the same subtree
as $r$ in $T-x$;
	\item $D_x$, which contains the vertices which are in the subtree of
$T$ rooted at $z$:
	\item $O_x$, which contains all the vertices which are in the subtree
rooted at some child $z_i \neq z$ of $x$ in $T$.
\end{itemize}
We are now read to prove:

\begin{lemma}
\label{lemma:2_path:size}
The structure $H$ returned by Algorithm \ref{alg:2_path} contains $O(n\cdot
\log n)$ edges.
\end{lemma}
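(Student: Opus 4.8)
The plan is to account separately for the edges added at the start of the algorithm and those added inside the loop over the path decomposition $\mathcal{P}$. The initial $3$-\texttt{EASPT} contributes only $O(n)$ edges, so the whole cost comes from the per-path, per-vertex contributions. First I would recall that the path decomposition $\mathcal{P}$ is obtained by recursively applying Lemma~\ref{lemma:path_decomposition_one_path}, which halves the subtree size at each level of the recursion; hence $\mathcal{P}$ partitions $E(T)$ into $O(\log n)$ \emph{layers}, where the paths in a single layer are vertex-disjoint and their vertex sets together cover at most $n$ vertices. This is the structural fact that multiplies an ``$O(n)$ per layer'' bound into the final $O(n\log n)$.

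The heart of the argument is therefore to show that, for a fixed path $P \in \mathcal{P}$, the total number of edges added to $H$ while iterating over the internal vertices $x \in V(P)$ is $O(|V(\subtree{G}{r}{x_{\mathrm{top}}})|)$, i.e.\ linear in the size of the subtree hanging from the top of $P$ — and, crucially, that summing this over one layer of $\mathcal{P}$ telescopes to $O(n)$. To do this I would charge the edges added for a given $x$ to the vertices in the three sets $U_x, D_x, O_x$ introduced before the lemma. The edges added at Lines~\ref{ln:protecting_x_start}--\ref{ln:protecting_x_end} form $O(1)$ \FirstLast{} pieces whose total length is bounded by the number of vertices they traverse in $U_x$ before first entering $\subtree{G}{r}{z}$ plus the tail inside it; likewise Lines~\ref{ln:protecting_z_start}--\ref{ln:protecting_z_end} add $O(1)$ shortest paths $\pi_{G-\hat e}(z)$ (and one more for each edge of $C(x)$ on such a path, but at most one such edge lies on a fixed shortest path since $C(x)$ is a cut), each of length $O(|U_x| + |D_x|)$; the per-child edges of Lines~\ref{ln:protecting_children_start}--\ref{ln:protecting_children_end} add exactly one edge per child $z_i$, i.e.\ $O(|O_x|)$ in total over all $x$ and children along $P$ since distinct children correspond to distinct subtrees in $O_x$; and Lines~\ref{ln:protecting_other_start}--\ref{ln:protecting_other_end} add a subforest of $\subtree{G-x}{r}$ restricted to vertices outside $\subtree{G}{r}{z}$, i.e.\ at most $|U_x| + |O_x|$ edges. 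Summing over the internal vertices $x$ along $P$: the $D_x$ sets are nested and shrink as we descend $P$ (each $D_x$ is contained in the subtree hanging from the previous vertex of $P$), the $O_x$ sets are pairwise disjoint (they are subtrees rooted at off-path children), and the $U_x$ term is the expensive one — a naive bound gives $|U_x|$ times the length of $P$.

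The main obstacle is exactly this potential $|U_x| \cdot |V(P)|$ blow-up from the shortest paths $\pi_{G-\hat e}(x)$ and $\pi_{G-\hat e}(z)$ that run through $U_x$. I would resolve it by observing that these paths are added \emph{only until they first reach $V(\subtree{G}{r}{z})$} (that is the whole point of the \FirstLast{} operator and of stopping $\pi_{G-\hat e}(z)$-type paths once they enter the relevant subtree), together with the fact that, by construction, $H$ already contains the shortest paths from $x$ to every vertex of $\subtree{G}{r}{x} \setminus \subtree{G}{r}{z_i}$ — a property used repeatedly in the stretch proofs — so the portion of any such path lying \emph{inside} $\subtree{G}{r}{x}$ is of constant combinatorial size and need not be recharged. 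Hence each of the $O(1)$ paths added for a vertex $x$ can be charged, up to a constant, to the vertices it visits in $U_x \cup D_x \cup O_x$ \emph{that it does not revisit for any other $x' \in V(P)$}; arranging the charging so that a vertex of $U_x$ is paid for by whichever $x$ along $P$ is ``closest'' to it yields $O(|V(\subtree{G}{r}{x_{\mathrm{top}}})|)$ per path $P$. Finally, summing over all paths in one layer of $\mathcal{P}$ gives $O(n)$, and over the $O(\log n)$ layers gives $O(n\log n)$, completing the proof.
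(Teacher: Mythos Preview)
Your overall framework (initial $O(n)$ edges, path decomposition into $O(\log n)$ levels, charging against $U_x$, $D_x$, $O_x$) matches the paper's, but you are missing the one observation that makes the per-path bound go through, and your proposed workaround for the ``$|U_x|\cdot|V(P)|$ blow-up'' does not work as stated. The key fact is that \emph{no new edge entering a vertex of $U_x$ is ever added to $H$}. Indeed, every path whose edges the algorithm inserts is a shortest path from $r$ in $G-\hat e$, $G-\hat e-e'$, or $G-x$; for any vertex $w\in U_x$ the tree path $\pi_T(r,w)$ avoids all the removed edges, so by the tie-breaking convention the prefix of that shortest path up to $w$ \emph{is} the tree path, and the edge entering $w$ is already in $T\subseteq H$. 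Once this is established, the $|U_x|$ term you worry about simply vanishes: for each of the $O(1)$ paths considered at a fixed $x$, all new edges enter $O_x\cup\{x\}\cup D_x$, and at most one of them enters $D_x$ (by \FirstLast{} for Lines~\ref{ln:protecting_x_start}--\ref{ln:protecting_x_end}, and because once inside $D_x$ the remaining path to $z$ is a tree path for Lines~\ref{ln:protecting_z_start}--\ref{ln:protecting_z_end}). Hence each $x$ contributes $O(|O_x|)$ new edges, and the pairwise disjointness of the $O_x$ along $P$ gives $O(|V(\subtree{G}{r}{x_{\mathrm{top}}})|)$ per path, exactly as you want.

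By contrast, your charging scheme (``a vertex of $U_x$ is paid for by whichever $x$ along $P$ is closest to it'') cannot bound the number of \emph{new} edges: the paths $\pi_{G-\hat e}(z)$ for different $x$ along $P$ can all traverse the same large portion of $U_x$, and nothing in your argument distinguishes ``new'' edges from ones already present. The resolution is not a cleverer charging but the structural fact above that those edges are never new in the first place. A minor side remark: your claim that at most one edge of $C(x)$ lies on $\pi_{G-\hat e}(z)$ is off---a simple path through $x$ may use two such edges---so Lines~\ref{ln:protecting_z_start}--\ref{ln:protecting_z_end} consider up to three paths, not two; this does not affect the asymptotics.
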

\begin{proof}
To prove the claim we fix a generic path $P = \langle u, \dots, v \rangle$ (of
at least two edges) of the path decomposition, where $v$ is a left and $u$ one
its ancestors in $T$. We show that, when Algorithm~\ref{alg:2_path}
considers $P$, the total number of edges added to $H$ is
$O(|V(\subtree{G}{r}{u})|)$.

\begin{figure}
\includegraphics[scale=1]{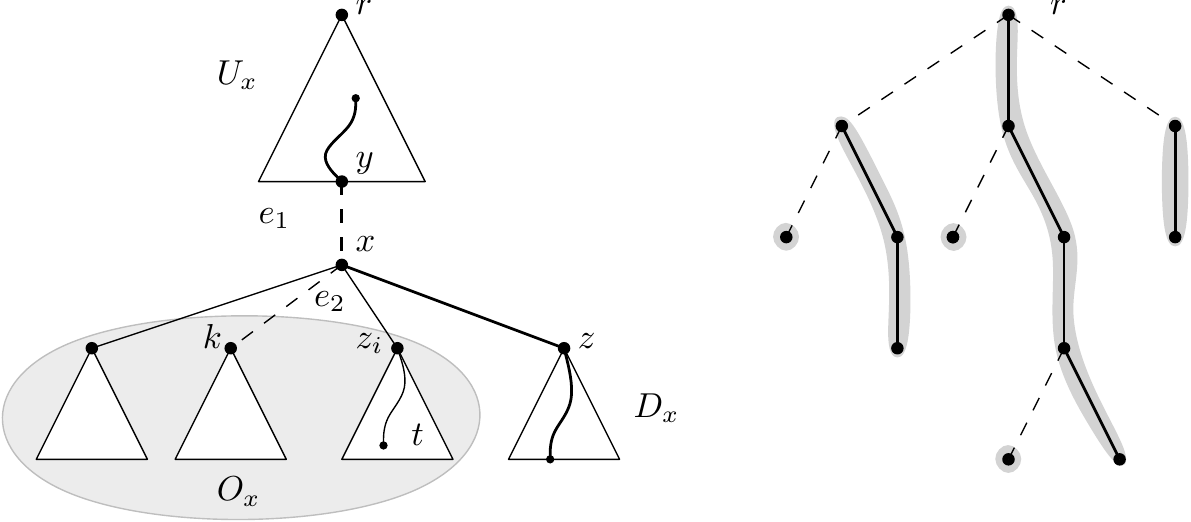}
\caption{Left: a view of the partition of the vertices induced by the removal of 
a
pair of edges of $E(\tree{G}{r})$.
Right: A path decompostion of a tree. Paths of the decomposition are
highlighted. Edges connecting the roots of the resulting subtrees to a path of
the decomposition are dashed.}
\label{fig:2_failure_structure}
\end{figure}

For the sake of the analysis, imagine the edges of paths considered by the
algorithm as if they were directed. Notice that no new edge entering a vertex
in $U_x$ can be added to $H$, as the shortest paths towards vertices in $U_x$
cannot change, and $H$ contains a shortest path tree $T$ of $G$. Hence, in the
following, we ignore all the edges entering vertices in $U_x$.

In Lines~\ref{ln:protecting_x_start}--\ref{ln:protecting_x_end}, the edges of
at most two paths are added to $H$. Moreover, by definition of
$\FirstLast(\cdot, \cdot)$, at most one edge of each path enters a vertex in
$D_x$.  This implies that the number of new edges is at most $O(O_x)$. In
Lines~\ref{ln:protecting_z_start}--\ref{ln:protecting_z_end}, at most $3$ paths
are considered as $\{\pi_{G-e_1}(z) \cap C(x)\}$ contains at most $2$ edges.
Each of those paths has at most one new edge which enters a vertex $q$ in $D_z$
since, once this happens, the shortest path from $q$ to $z$ of $T$ is already
in $H$. Again, the number of new edges is at most $O(O_x)$. In
Lines~\ref{ln:protecting_children_start}--\ref{ln:protecting_children_end}, at
most one edge for each children of $z_i \neq z$ of $x$ is added to $H$, and all
those children belong to $O_x$. Finally, in
Lines~\ref{ln:protecting_other_start}--\ref{ln:protecting_other_end} only new
edges entering vertices in $O_x$ are added to $H$, so their overall number is
$O(O_x)$.

As all the sets $O_x$ associated to the different vertices $x$ of $P$ are
pairwise vertex disjoint, we immediately have that at most
$O(|V(\subtree{G}{r}{v})|)$ edges are added to $H$ when path $P$ is examined.

The first path $P$ considered by Algorithm~\ref{alg:2_path} is the one obtained
by applying Lemma~\ref{lemma:path_decomposition_one_path} on $T$. The removal
of this path splits $T$ into a number $h$ of subtrees $T_1, \dots, T_h$ having
$\eta_1, \dots, \eta_h$ vertices respectively. Moreover we know that $\eta_i
\le \frac{n}{2} \; \forall i=1,\dots, h$ and that $\sum_{i=1}^h \eta_i \le n$.
If we reapply the procedure recursively, we get the following recurrence
equation describing the overall number of new edges:
\[
	S(n) = \sum_{i=1}^h S(\eta_i) + n
\]
which can be solved to show that $S(n) = O(n \log n)$.
To conclude the proof, we only need to notice that the set of paths
$\mathcal{P}$ used by Algorithm~\ref{alg:2_path} is defined exactly in this
very same recursive fashion, and that the tree $\hat{T}$ has $O(n)$ edges.
\end{proof}

Finally, we bound the running time of Algorithm~\ref{alg:2_path}:
\begin{lemma}
Algorithm~\ref{alg:2_path} requires $O(nm + n^2 \log n)$ time.
\label{lemma:2_path:complexity}
\end{lemma}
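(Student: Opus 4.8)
The plan is to charge the whole computation to the vertices of the shortest-path tree $T$, exploiting the fact that the path decomposition $\mathcal{P}$ partitions $V(T)$ into vertex-disjoint paths.

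First I would handle the work done outside the main loop. Initializing $H\gets \tree{G}{r}$ costs $O(n)$; computing the $3$-\texttt{EASPT} of \cite{NPW03} costs $O(nm)$, which is within the claimed bound; and building $\mathcal{P}$ by recursively invoking Lemma~\ref{lemma:path_decomposition_one_path} costs $O(n)$ per recursion level, with $O(\log n)$ levels since each resulting subtree has fewer than half of the vertices of its parent, hence $O(n\log n)$ in total. I would also precompute once, in $O(n)$ time, a DFS interval on $T$ for every vertex, so that subtree-membership tests --- needed by the $\FirstLast$ operation --- take constant time.

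Next I would bound the main loop. Since the paths of $\mathcal{P}$ are pairwise vertex-disjoint, the total number of vertices $x$ examined by the inner loop, summed over all $P\in\mathcal{P}$, is at most $n$. For a fixed such $x$ (with its $T$-parent edge $\hat{e}$ and its child $z$ in the current path $P$), Algorithm~\ref{alg:2_path} performs only a constant number of single-source shortest-path computations in graphs obtained from $G$ by deleting at most two edges or the single vertex $x$ --- namely those needed to obtain $\pi_{G-\hat{e}}(x)$, $\pi_{G-\hat{e}-e'}(x)$, $\pi_{G-\hat{e}}(z)$, the (at most two) paths $\pi_{G-\hat{e}-e'}(z)$ with $e'\in C(x)$, and the tree $\tree{G-x}{r}$ --- each of which takes $O(m+n\log n)$ time via Dijkstra's algorithm with a Fibonacci heap; the accompanying $\FirstLast$ calls, set unions and tree scans cost $O(n)$ each and are therefore dominated. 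In addition, for every child $z_i\neq z$ of $x$ the algorithm performs one shortest-path computation $\pi_{G-\hat{e}-(x,z_i)}(x,z_i)$; since the total number of such $(x,z_i)$ pairs over all examined $x$ is bounded by the number of edges of $T$, i.e.\ $n-1$, these contribute $O(n)$ further shortest-path computations. Altogether the main loop runs in $O(n(m+n\log n))$ time.

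Adding up all contributions, the total running time is $O(nm+n^2\log n)$, as claimed. The only non-routine point is the \emph{global} (as opposed to per-path) accounting of the $O(n)$ shortest-path computations: it relies on the vertex-disjointness of $\mathcal{P}$ to bound by $n$ the number of examined vertices $x$, and on $T$ being a tree to bound by $n-1$ the total number of examined $(x,z_i)$ pairs. I expect this aggregation to be the main --- albeit mild --- obstacle; all the per-call cost estimates are standard.
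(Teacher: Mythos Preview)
Your argument is correct and yields the claimed bound. The one place where you diverge from the paper is in accounting for Lines~\ref{ln:protecting_children_start}--\ref{ln:protecting_children_end}. You charge one fresh Dijkstra run to every pair $(x,z_i)$ and then amortize globally, using that the total number of such pairs over all examined $x$ is at most the number of edges of $T$, hence $O(n)$ additional shortest-path computations in all. The paper instead keeps the number of Dijkstra runs at $O(1)$ \emph{per vertex} $x$: while running Dijkstra from $r$ (and from $x$) in $G-\hat e$, it records, for every reached vertex, the last edge of its shortest path that crosses between the relevant pieces of $T-F$; this extra bookkeeping lets it extract, for every child $z_i\neq z$, the first edge $(u,q)$ entering $T_G(r,z_i)$ directly, in time proportional to $|O_x|$, without a dedicated shortest-path computation per child. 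Both routes arrive at $O(n(m+n\log n))$; yours is the more elementary counting argument, while the paper's avoids the global aggregation at the cost of a slightly heavier augmented-Dijkstra step. Either is a valid proof of the lemma.
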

\begin{proof}
First of all, observe that a rough estimate of the time needed for computing 
the 
path decomposition $\mathcal{P}$ is $O(n^2)$ and that the time needed to build 
$\hat{T}$ is $O(n m)$ \cite{NPW03}. Moreover each vertex $x$ get considered at 
most once.
	
When the algorithm is considering a vertex $x$, a constant number of different
shortest paths are needed. Those can be computed in $O(m+n \log n)$ time
using the Dijkstra's algorithm where, for each vertex $v$, we also store the
last edge of its shortest path that (i) leaves the same connected component of
$r$ in $T-F$, (ii) leaves $\subtree{G}{r}{z}$, and (iii) enters the same 
connected
component as $v$ in $T-F$. This allows to implement $\FirstLast(\cdot)$ and to
add the edges needed in
Lines~\ref{ln:protecting_children_start}--\ref{ln:protecting_children_end},
\ref{ln:protecting_other_start}--\ref{ln:protecting_other_end} in time
proportional to the vertices in $O_x$. Hence, the overall time spent by adding
edges to $H$ is again $O(n \log n)$.
\end{proof}
\noindent By Lemmata~\ref{lemma:2_path_t_in_Tz}--\ref{lemma:2_path:size}, 
Theorem~\ref{thm:2_path_preproc} follows.
\section{Oracle Setting for $f=2$ and Proof of Theorem~\ref{thm:oracle_2_path}}
We here give a brief description of how to modify Algorithm~\ref{alg:2_path} in
order to build an oracle of size $O(n \cdot \log n)$ which is able to report,
with optimal query time, both a $3$-stretched shortest path in $G-F$ and its
distance, when $F$ contains two consecutive edges in $T$.

In order to do so, we first add an additional step to
Algorithm~\ref{alg:2_path} which computes an $O(n)$ size structure which is able 
to
answer LCA queries in $O(1)$ time \cite{HT84}. Then we store the tree $T$ and,
for each vertex $x$, its child $z$ on the path decomposition.

Whenever we are considering a vertex $x$ and its child $z \in P$, we also store
each path, say $\pi$, towards a vertex, say $u$, considered in
Lines~\ref{ln:protecting_x_start}--\ref{ln:protecting_x_end},
\ref{ln:protecting_z_start}--\ref{ln:protecting_z_end}, using a \emph{compact
representation}. To be more precise, let $s$ be the last vertex of $\pi$ which
belongs to the same component as $r$ in $T-F$, and let $q,q^\prime$ be the
first and last vertex of $\pi$ which belong to $T(z)$. We only store the (i)
vertices $s,q,q^\prime$, (ii) the subpaths $\pi[s:q]$, $\pi[q^\prime, u]$ along
with their lengths, and (iii) a reference to the position $x$ in the subpaths of
$\pi$, if any. If $q,q^\prime$ do not exists, we simply store $s$, $\pi[s:u]$,
$w(\pi[s,u])$, and a reference to $x$.

In Lines~\ref{ln:protecting_children_start}--\ref{ln:protecting_children_end},
we add one edge $(u,q)$ for each children $z_i \neq z$ of $x$. We store $(u,q)$
alongside $z_i$.

Finally, in
Lines~\ref{ln:protecting_other_start}--\ref{ln:protecting_other_end}, we add
some edges of the shortest path tree $\tree{G-x}{r}$. 
For each vertex $u \in O_x$, we store (i) the edge leading to its parent
in $\tree{G-x}{r}$, (ii) the last vertex $q$ of $pi_{G-x}(u)$ which is either in
$U$ or in $V(\subtree{G}{r}{z})$, (iii) the length of $\pi_{G-x}(u)[q,u]$, and 
iv)
the root of the subtree containing $u$ in $T-x$.

Since the amount of memory used to do so is always proportional to the vertices
in $O_x$ we have that the overall size
is still $O(n \log n)$. It is easy to see that, given a path 
failure\footnote{Once again, we focus on the failure of exactly two edges. To 
handle the failure of only one edge $e$, it suffices to store a single backup 
edge associated with $e$, as shown in \cite{NPW03}.} $F = \{ (y,x), (x,k) \}$
and a vertex $t$, we can answer a query by building (or computing the distance
of) $\pi^*(t)$ as described in the appropriate lemma in Lemmata
\ref{lemma:2_path_t_in_Tz}--\ref{lemma:2_path_t_not_in_Tz_x}. In order to do so
we need to know:
\begin{itemize}
\item The root of the subtrees of $T-x$ containing $t$.
\item Whether $\pi_{G-F}(t)$ contains $x$.
\end{itemize}
The former can be easily done by querying, in constant time, the least common 
ancestors of the pairs $t,z$ and $t,x$ in $T$ to determine if $z$ belongs to $U$ 
or $\subtree{G}{r}{z}$. If that is not the case, then the root of the sought 
subtree was explicitly stored and can be retrieved. As for the latter, we 
consider both cases. That is, we compute two candidate paths, we discard the one 
containing $(x,z_i)$, if any (this is done using the pointers to $x$), and we 
return the shortest of the remaining paths (or its distance). The above 
reasoning suffices to prove Theorem~\ref{thm:oracle_2_path}.

\end{document}